\newtheorem{proposition}{Proposition}
\newtheorem{definition}{Definition}
\newtheorem{theorem}{Theorem}
\newtheorem{corollary}{Corollary}
\newtheorem{lemma}{Lemma}
\def\RR{\mathbb{R}}
\def\ZZ{\mathbb{Z}}
\def\V#1{{\bm #1}}          
\def\M#1{{{\mathbf{#1}}}}   
\def\ud{\mathrm{d}}
\def\ue{\mathrm{e}}
\def\uj{\mathrm{j}}
\def\Re{\mathop{\mathrm{Re}}\nolimits}
\def\sfrac#1#2{\hbox{$\frac{#1}{#2}$}} 
\def\HH{\mathrm{H}} 
\newcommand{\norm}[1]{\left\lVert#1\right\rVert}
\newcommand{\abs}[1]{\left\vert #1\right\vert }
\definecolor{indigo}{rgb}{.4,.07,.48}
\long\def\ED#1{{\color{black}#1}}
\begin{document}
\graphicspath{ {./figures/} }


\title{Angular Accuracy of Steerable Feature Detectors}
\author{Zsuzsanna~P\"{u}sp\"{o}ki,
{Arash Amini}, {Julien Fageot}, {John Paul Ward},
             and~Michael~Unser,~\IEEEmembership{Fellow,~IEEE}
\thanks{The research leading to these results has received funding from the European Research Council under the European Union's Seventh Framework Programme (FP7/2007-2013)
ERC grant agreement N\textdegree $ 267439$ and the Hasler Foundation.}
\thanks{Z. P\"usp\"oki, J. Fageot and M. Unser are with the Biomedical Imaging Group, \'Ecole polytechnique f\'ed\'erale de Lausanne (EPFL), Station 17, CH--1015 Lausanne VD, Switzerland. A. Amini is with the department of Electrical Engineering and Advanced Communications Research Institute (ACRI), Sharif University of Technology, Tehran, Iran. J.P. Ward contributed to this work while affiliated each of the following: Department of Mathematics, University of Central Florida, Orlando, FL 32816, USA and Department of Mathematics, North Carolina A\&T State University, Greensboro, NC 27411, USA.}}

\markboth{Angular Accuracy of Steerable Feature Detectors}
{P\"{u}sp\"{o}ki, Amini, Fageot, Ward, and Unser}

\maketitle

\ifCLASSOPTIONpeerreview
\begin{center}
 \vspace{-0.6cm}
 Authors' contact information:\\
 \vspace{0.2cm}
 \'{E}cole polytechnique f\'{e}d\'{e}rale de Lausanne\\
 Biomedical Imaging Group\\
 CH-1015 Lausanne, Switzerland.\\
 Tel: +41(0)216935136, Fax: +41(0)216933701.\\
 Email: \url{{zsuzsanna.puspoki, michael.unser}@epfl.ch}\\
 Web: \url{http://bigwww.epfl.ch/}\\
 \vspace{0.2cm}
\end{center}
\fi

\begin{abstract}
\boldmath

The detection of landmarks or patterns is of interest for extracting features in biological images. Hence, algorithms for finding these keypoints have been extensively investigated in the literature, and their 
localization and detection properties are well known.  In this paper, we study the complementary topic of local orientation estimation, which has not received similar attention.    
Simply stated, the problem that we address is the following: estimate the angle of rotation of a pattern with steerable filters centered at the same location, where the image is corrupted by colored isotropic Gaussian noise. 
For this problem, we use a statistical framework based on the Cram\'er-Rao lower bound (CRLB) that sets a fundamental limit on the accuracy of the corresponding class of estimators. 
We propose a scheme to measure the performance of estimators based on steerable filters (as a lower bound), while considering the connection to maximum likelihood estimation. Beyond the general results, we analyze the asymptotic behaviour of the lower bound in terms of the order of steerablility and propose an optimal subset of components that minimizes the bound. 
We define a mechanism for selecting optimal subspaces of the span of the detectors. These are characterized by the most relevant angular frequencies.
Finally, we project our template to a basis of  steerable functions and experimentally show that the prediction accuracy achieves the predicted CRLB.  As an extension, we also consider steerable wavelet detectors.

\end{abstract}

\ifCLASSOPTIONjournal
\begin{IEEEkeywords}
Cram\'er-Rao lower bounds, estimation of orientation, steerable filters, steerable wavelets 
\end{IEEEkeywords}
\fi

\IEEEpeerreviewmaketitle

\section{Introduction}\label{sec:Introduction}

Steerable feature detectors \cite{OrientationZs} are popular tools for the quantitative analysis of images where the information that corresponds to the notions of rotation, directionality, and orientation is a key component.
Their application area is continuously growing: 
from nano to macro, 
from biomedical imaging 
\cite{patton_retinal_2006} 
to astronomy 
\cite{schuh_comparative_2014}, 
from material sciences 
\cite{Dan} 
to aerial and satellite imaging \cite{4378557}, 
and so on.
The analysis of local directional patterns also includes the detection of ridges and junctions of any order; applications can be found in  
\cite{OlivoM2010, lung, Xia, Maire, Drawing}. 

Typically, concerning steerable detectors, a two-step algorithm is used for identifying features of interest and determine their orientation. The first step is to find their location in an image. Then, the second step is to steer the detector centred at the given keypoints to identify their exact orientation. In this paper, we focus on the second step only, studying the problem of extracting the local orientation of patterns in images. These local patterns, in particular, include junctions or crossing points with $N$-fold angular symmetry of any order.

In this paper, we examine the behavior of steerable filters in orientation detection, providing a lower bound on the performance of any detector that is in the span of a given family of steerable functions. In particular, we focus on steerable filters that are linear combinations of circular harmonics with given radial profiles. 
We formulate the problem in a statistical framework using the Cram\'er-Rao lower bound (CRLB) that  gives a limit on the error of the estimation. We investigate the connection to maximum likelihood estimation and show that, under some assumptions, our measurement functions correspond to the maximum likelihood estimator. 
We define the best subset of harmonics as the one that minimizes the CRLB for a given class of detectors and we provide a criterion for obtaining it.   
We also analyze the asymptotic behaviour of the lower bound in terms of the number of harmonics. This gives a theoretical limit on the precision of the estimation one can obtain with a given class of estimators.   
We experimentally show that the actual performance of the given steerable detectors follows the predicted theoretical bounds. 

\subsection{State of the Art}

Classical methods to detect orientations are based on gradient information (\textit{e.g.}, Canny edge detector \cite{Canny1986}), on directional derivatives \cite{Xia}, and on the structure tensor \cite{Jahne}. Variations of the latter method can be found in \cite{Koethe2003} and \cite{Bigun.etal2004}.
To capture higher-order directional structures, the Hessian and higher-order derivatives can be used \cite{Foerstner86, Bigun87}.
While simple and computationally efficient, these methods have drawbacks: they only take into account one specific scale and the estimation of the orientation can be overly sensitive to noise.

Alternatively, directional pattern matching is also commonly used. 
It often relies on the discretization of the orientation and thus demands a tradeoff between accurate results and computational cost. 
An important exception to this are the steerable filters, where one may perform arbitrary (continuous) rotations and optimizations with a substantially reduced computational overhead. The basics of steerability were formulated by Freeman and Adelson in the early nineties \cite{freeman1990steerable} and developed further by Perona \cite{perona1992steerable}, Portilla and Simoncelli \cite{PortillaParametric}, and Ward and Unser \cite{ward2013harmonic}. Applications of steerable filters were presented in \cite{JacobSteer, karssemeijer, Schmitter}. In \cite{MUE12a}, multisteerable filters were designed and adjusted to the patterns of interest to determine the precise angular distribution of coinciding branches. 
There, the detection and classification of polar-separable patterns (including junctions) rely on a classical structure-tensor scheme, complemented by the  multisteerable filters.
\ED{The authors are not aware of any previous studies that thoroughly address the angular accuracy of steerable feature detectors.}

Over the past decades, multiresolution methods have attracted a lot of interest. In particular, the design of directional wavelets has become a popular way to construct robust filters. This approach has made it possible to process oriented features independently at different scales.
In \cite{NicolasSIAM}, Unser and Chenouard proposed a unifying parametric framework for 2D steerable wavelet transforms. The goal of \cite{NicolasSIAM} was to combine steerability with tight wavelet frames and propose a general scheme to design such wavelets.
In \cite{ZsSteerable} and \cite{PuspokiISBI}, the authors designed wavelets that can serve as a basis to detect features such as junctions and local symmetry points.
An application of steerable wavelets for texture learning was presented in \cite{Depeursinge}.
Here, steerable templates are represented in the Fourier domain using circular harmonics. The order of the detector is given by the number of harmonics, which  also affects the performance of the detection. 

\subsection{Roadmap}

The paper is organized as follows: 
In Section \ref{EP}, we provide a description of our directional pattern model and formulate the estimation problem. 
In Section \ref{measframework}, we discuss our estimation strategy and propose a reference class of estimators. 
In Section \ref{sec:CRLB}, we identify the probability law of the measurement vector and introduce the Cram\'er-Rao lower bound (CRLB).
In Section \ref{sec:MLE}, we consider the connection to maximum likelihood estimation. 
In Section \ref{sec:crlb1}, we provide the CRLB for orientation estimation based on steerable filters. 
We consider the case when the number of angular frequencies is finite (Section \ref{finite}) and also the asymptotic case for a fixed radial profile (Section \ref{asymptotic}). 
In Section \ref{bestN}, we examine the question of how to choose the best subset of harmonics to achieve the lowest CRLB. 
Section \ref{experiments} contains experiments related to the CRLB.
In Section \ref{sec:crlb2}, we extend our results to wavelets. 


\section{The Estimation Problem at a Glance}
\label{epg}

\subsection{Notations}

We use $f(\V x)$ with $\V x \in \RR^2$ and $f(r,\theta)$ with $r \in \RR^+$, $\theta \in [0,2\pi) $ to denote the Cartesian and polar representations of the same 2D function $f$, respectively. Corresponding notations in the Fourier domain are $\hat{f}(\V \omega)$ and  $\hat{f}(\omega,\varphi)$ with $\V \omega \in \RR^2$ and $\omega \in \RR^+$, $\varphi \in [0,2\pi) $.
The Fourier transform of a Lebesgue-integrable, finite-energy function $f \in L_1 \left(\mathbb{R}^2 \right) \cap L_2 \left(\mathbb{R}^2 \right) $ is denoted by $\mathcal{F }\{ f \} = \hat{f}$ and computed according to
\begin{equation}
\hat{f}(\bm{\omega})
=
\int_{\mathbb{R}^2} f(\bm{x})
\ue^{ -\uj\left<\bm{x},\bm{\omega}\right>} {\rm d}\bm{x},
\end{equation} 
where $\langle \bm{x}, \bm{\omega} \rangle = x_1\omega_1 + x_2\omega_2$ is the usual scalar product on $\RR^2$. 

The 2D matrix of rotation by $\theta_0$ is denoted by 
\begin{equation}
\M R_{\theta_0} = \begin{pmatrix}
\cos(\theta_0) &  -\sin(\theta_0)\\
\sin(\theta_0) &  \phantom{+}\cos(\theta_0)
\end{pmatrix}.
\end{equation}

\subsection{The Estimation Problem}
\label{EP}

Our interest in this paper lies in detecting a rotated pattern $J(\V x)$ from a noisy signal of the form 
\begin{align} \label{eq:imagemodel}
I(\V x ) = J(\M R_{- \theta^*} (\V x- \V x_0)) + S(\V x)
\end{align}
using a steerable filter bank. 
Here, $J$ denotes the general shape of the pattern of interest and $J(\M R_{- \theta^*} (\V x - \V x_0))$ its rotated version around location $\bm{x}_0$ with an unknown angle $\theta^*$. 
$S$ is the background signal, modeled as the realization of an isotropic Gaussian self-similar random field. The motivation behind this choice is that the power spectrum of many natural images is isotropic with an $1/ \lVert \V\omega \rVert^\gamma$ type of decay, which is consistent with long-range dependencies \cite{Vehel, Pentland}. Also, this model of background signal fits fluorescence microscopy images well \cite{SageTracking}, which is relevant to many practical applications of orientation-estimation methods. Further details on our noise model are given in Appendix \ref{sec:noise}.

We shall perform the correlations between measurement filters and the image $I$ in a sliding fashion, and we are interested in determining the performance of this detection method in terms of angular accuracy. To that end, we shall only examine the position where the detector hits the target. Thus, without loss of generality, we set $\V{x}_0 = \V{0}$ in \eqref{eq:imagemodel}.

We analyze the image $I$ through a family of measurement filters $\xi_{\alpha}$ parameterized by a (multi)-index $\alpha$. 
The measurements at location $\V{x}_0 = \V{0}$ are then given by
\begin{align}  \label{eq:qalpha}
q_{\alpha}   = \langle I  , \xi_{\alpha}  \rangle  
   = \langle J(\M R_{- \theta^*} \cdot ) , \xi_{\alpha} \rangle + \langle S, \xi_{\alpha} \rangle.
\end{align}
We also set $u_\alpha = \langle J , \xi_\alpha \rangle$ and $s_\alpha = \langle S , \xi_\alpha \rangle$. 

We distinguish three different cases related to the construction of $q_{\alpha}$ with $\hat{\xi}_{\alpha}(\omega)= \hat{h}_{\alpha}(\omega) \ue^{\uj n_{\alpha}\varphi}$ (cf. Table \ref{tab:q_alpha}). In this paper, we address the question of angular accuracy of measurement filters constructed as conventional as well as wavelet detectors. 
\begin{table}[!t]
\centering
\caption{Construction of measurement functions.}
\vspace{0.05 in}
\begin{tabular}{c l c  l c}
\hline
\hline
$q_{\alpha}$ & radial profile & harmonics\\
\hline
Conventional detector \\$\alpha =n \in\ZZ $ & $\hat{\xi}_n(\omega) = \hat{h}(\omega)$ fixed & $\ue^{\uj n\theta}$\\
\hline
Wavelet detector \\$\alpha =(n,i)\in\ZZ^2$ & $\hat{\xi}_{(n,i)}(\omega) = 2^i\hat{h}(2^i\omega)$ fixed & $\ue^{\uj n\theta}$\\
\hline
General detector\\ $\alpha \in\ZZ^d $ & $\hat{\xi}_\alpha(\omega) = \hat{h}(\omega)$ adaptive & $\ue^{\uj n_\alpha \theta}$\\
\hline
\hline
\end{tabular}
\label{tab:q_alpha}
\end{table}

The local orientation angle $\theta^*$ is estimated from the vector of measurements $\V q = (q_\alpha)$. Any estimator based on this framework is then a mapping $E$ that takes the measurements  $\V q$ and returns an estimate $\tilde{\theta}$ of $\theta^*$. 

\subsection{Steerable Filterbanks and Estimation Strategy}
\label{measframework}

We perform the estimation of the unknown angle $\theta^*$ by selecting a suitable filter $\xi$ that is a linear combination of the measurement filters $\xi_\alpha$, and by selecting $\tilde{\theta}$ as the solution of
\begin{equation}\label{eq:tildetheta} 
	\tilde{\theta}= {\text{argmax}}_{\theta_0 \in [0,2\pi)} 
\left\langle I, \xi (   \M R_{\theta_0} \cdot )  \right\rangle.
\end{equation}
In order to define an estimator that meets the requirements of Section \ref{EP},   $\xi$ needs to satisfy three properties.
\begin{itemize}
	\item The filter $\xi$ has to be a good approximation of the pattern of interest $J$, with the consequence that \eqref{eq:tildetheta} corresponds to the detection of this pattern at the correct orientation. \ED{We note that angular (quasi) symmetries in $J$ might lead to mis-detection of the angle. }
	\item The filter $\xi$ has to be robust to the background signal $S$, such that the estimation $\tilde{\theta}$ mostly depends on the pattern $J$.
	\item The estimator $\tilde{\theta}$ in \eqref{eq:tildetheta} should be computable only based on the knowledge of the measurements $(q_\alpha$). 
\end{itemize}  
We now identify the filters $\xi_\alpha$ and $\xi$ that allow us to achieve these three goals. To do so, we rely on steerable filters. 
We briefly introduce the concepts of steerability and refer the reader to Appendix \ref{sec:steerability} for more details.

We consider filters $\xi_\alpha \in L_2(\mathbb{R}^2)$ that are polar-separable and have the form 
\begin{equation}
  \label{eq:polsep}
  \xi_\alpha(r,\theta) = \eta_\alpha(r) \ue^{\uj n_\alpha\theta}
\end{equation}
with $\eta_\alpha$ the \textit{radial profile} 
and $n_\alpha\in\ZZ$ the \emph{harmonic} of $\xi_{\alpha}$. 
We assume that $\xi_\alpha$ is normalized such that $\lVert \xi_\alpha \rVert_2 = 1$. The polar separability of the $\xi_\alpha$'s implies that
\begin{align} \label{eq:scalarxis}
	\langle \xi_\alpha , \xi_\beta \rangle &= 
	\int_0^{\infty}   \eta_{\alpha} (r) \eta_{\beta}(r) r \mathrm{d} r \int_0^{2\pi} \ue^{\uj (n_\alpha - n_\beta) \theta} \frac{\mathrm{d}\theta}{2\pi} \nonumber \\
	&= \delta[n_\alpha - n_\beta] \int_0^{\infty}   \eta_{\alpha} (r) \eta_{\beta}(r) r \mathrm{d} r
\end{align}
with $\delta[\cdot]$ the Kronecker delta.
This shows that the system $(\xi_\alpha)$ is orthonormal when the $n_\alpha$ are all distinct.

The measurement function $\xi_\alpha$ being polar-separable, its Fourier transform is also polar-separable as
\begin{align}  
\hat{\xi}_\alpha(\omega,\varphi) = \hat{h}_\alpha(\omega) \ue^{\uj n_\alpha\varphi}.
\label{eq:FourSepN}
\end{align}
The function $\hat{h}_\alpha$ is related to $\eta_\alpha$ and $n_\alpha$ by the Hankel transform (see, \emph{e.g.}, \cite[Proposition 2]{ZsSteerable}).

We remark that the rotated version of $\xi_{\alpha}$ satisfies 
\begin{equation}\label{eq:rotatexialpha}
\xi_{\alpha}(\M R_{\theta_0} \V x ) = \xi_{\alpha}( r , \theta + \theta_0) = \mathrm{e}^{\mathrm{j} n_\alpha \theta_0} \xi_{\alpha}(\V x),
\end{equation} 
and is therefore steerable in the sense of Definition \ref{def:steerable} in Appendix \ref{sec:steerability}. 
We then select a filter  of the form $\xi = \sum_{\alpha} c_{\alpha} \xi_\alpha$, where the vector $\V c = (c_\alpha)$ determines the shape of the filter. 
The way of selecting  $\V c$ adequately is discussed in Proposition \ref{prop:MLE}.
Hence, $\xi$ is the best possible filter for the pattern $J$ in the mean-square sense, once the measurement functions are given.

As a linear combination of steerable filters (see \eqref{eq:rotatexialpha}), the filter $\xi$ is steerable  with 
\begin{equation} 
	\xi( \M R_{\theta_0} \V x ) = \sum_\alpha c_\alpha \mathrm{e}^{\mathrm{j} n_\alpha \theta_0}\xi_\alpha(\V x).
\end{equation}
This means that it is sufficient to apply the filtering with $\xi_\alpha$ only once. Then, the rotated filter for any arbitrary angle can be determined by a systematic and linear transformation of the initial basis filters.


\subsection{The Law of Measurement Vector and the Cram\'er-Rao Lower Bound}
\label{sec:CRLB}

We aim at evaluating the performance of the estimators of $\theta^*$ depending on the measurement vectors $\V q = (q_{\alpha})$. In this section, we only consider estimators $\tilde{\theta} = \tilde{\theta}(\V q)$ that are unbiaised; that is,
\begin{equation} \label{eq:unbiaisedestimator}
	\mathbb{E} [ \tilde{\theta}(\V q )] = \theta^*.
\end{equation}
The performance of the estimator is then measured by the mean-square error
\begin{equation}\label{eq:meansquareerror}
	\mathbb{E} [ (\tilde{\theta}(\V q ) -\theta^*)^2].
\end{equation}

As is well-known, there is a theoretical bound for the mean-square error that cannot be surpassed by an unbiaised estimator, called the Cramer-Rao lower bound (CRLB). The latter is given by (see \cite{kay1993fundamentals}) 
\begin{equation} \label{eq:CRLBdef}
\mathbb{E} [ (\tilde{\theta}(\V q ) -\theta^*)^2] \geq 1 /  \mathrm{FI}(\theta^*) = \mathrm{CRLB}(\theta^*),
\end{equation}
where $\mathrm{FI}(\theta^*)$ is the Fisher information of the measurement vector $\V q$.

In Theorem \ref{prop:Statsq}, we provide the mean vector, the covariance matrix and the Gaussian measurement vector and deduce its Fisher information. We recall here that the measurement filters $\xi_\alpha$ are polar-separable of the form \eqref{eq:polsep}. 

\begin{theorem} \label{prop:Statsq}
Consider measurement filters of the form \eqref{eq:polsep}. We assume that the $\xi_\alpha$ are such that 
\begin{equation} \label{eq:conditionxialpha}
\int_{0}^{\infty} \omega^{1-2\gamma} \lvert \hat{h}_{\alpha} (\omega) \rvert^2 \mathrm{d}\omega < \infty,
\end{equation} 
where the $\hat{h}_\alpha$ are defined in \eqref{eq:FourSepN}.
Then, the measurement vector $\V q$ is Gaussian with mean vector
\begin{equation} \label{eq:meanq}
	\V \mu = \V \mu (\theta^*) = \left( \mathrm{e}^{\mathrm{j} n_\alpha \theta^*} \langle J , \xi_\alpha\rangle \right)_\alpha
\end{equation}
and covariance matrix
\begin{equation} \label{eq:covq}
	\M C = \left(  \delta[n_\alpha - n_\beta] \int_0^{\infty} \omega^{1 - 2 \gamma} \hat{h}_\alpha(\omega) 	\overline{\hat{h}_\beta (\omega)} \mathrm{d}\omega \right)_{\alpha,\beta},
\end{equation}
where the $h_\alpha$ are given in \eqref{eq:FourSepN} and $\gamma$ is the order of the whitening operator of $S$.
The Fisher information is then
\begin{equation}\label{eq:FIq}	
	\mathrm{FI}(\theta^*) 
	= 2 \mathrm{Re} 
	\left( 
	\left( \frac{\mathrm{d} \V \mu}{\mathrm{d} \theta^*}\right)^H 
	\M C^{-1} 
	\frac{\mathrm{d} \V \mu}{\mathrm{d} \theta^*} 	
	\right),
\end{equation}
where $\V x^H = (\V x^*)^T$ is the Hermitian transpose of $\V x$.
\end{theorem}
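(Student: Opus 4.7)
The plan is to break the statement into four pieces (Gaussianity of $\V q$, mean $\V\mu$, covariance $\M C$, and the Fisher information formula) and address them in order. The first three are essentially direct computations once the Gaussian-field model for $S$ from Appendix A and the steerability relation (14) are in hand; the fourth is an application of the Slepian--Bangs identity.

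First I would observe that $q_\alpha = \langle J(\M R_{-\theta^*}\cdot), \xi_\alpha\rangle + \langle S, \xi_\alpha\rangle$. Since $S$ is a zero-mean isotropic Gaussian random field, any finite collection of linear projections of $S$ is jointly Gaussian as soon as the projections have finite variance. Computing $\mathrm{Var}(\langle S, \xi_\alpha\rangle)$ in the Fourier domain against the spectral density $\|\V\omega\|^{-2\gamma}$ and passing to polar coordinates reduces it to $\int_0^\infty \omega^{1-2\gamma}|\hat h_\alpha(\omega)|^2 \mathrm{d}\omega$, which is finite by hypothesis (19). Hence $\V q$ is Gaussian.

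Next I would evaluate the mean: since $S$ is zero-mean, $\mathbb{E}[q_\alpha] = \langle J(\M R_{-\theta^*}\cdot), \xi_\alpha\rangle$. A change of variable $\V y = \M R_{-\theta^*}\V x$, combined with the steerability identity (14), extracts the angle as a scalar phase $\mathrm{e}^{\mathrm{j} n_\alpha \theta^*}$, producing (20). For the covariance I would write $\mathrm{Cov}(q_\alpha,q_\beta) = \mathbb{E}[s_\alpha\overline{s_\beta}]$, invoke Parseval to express it as a Fourier-domain integral weighted by $\|\V\omega\|^{-2\gamma}$, substitute the polar-separable form $\hat\xi_\alpha(\omega,\varphi) = \hat h_\alpha(\omega)\mathrm{e}^{\mathrm{j} n_\alpha \varphi}$, and carry out the angular integral. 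Orthogonality of $\{\mathrm{e}^{\mathrm{j} n \varphi}\}$ on $[0,2\pi)$ produces the Kronecker $\delta[n_\alpha-n_\beta]$ that kills every cross-harmonic entry, while the radial factor yields (21).

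For the Fisher information I would invoke the standard Gaussian formula in the case where the covariance $\M C$ is independent of $\theta^*$ and only the mean $\V\mu(\theta^*)$ depends on the parameter, so that the covariance-derivative term of Slepian--Bangs drops out. The factor $2\,\mathrm{Re}(\cdot)$ of (22) reflects that $\V q$ is complex while the parameter $\theta^*$ is real: equivalently, one may stack real and imaginary parts and apply the real-Gaussian formula, recovering the same expression after simplification. The main obstacle lies here, because $S$ is real-valued and the pairs $(q_\alpha, q_\beta)$ with opposite harmonics $n_\beta = -n_\alpha$ are conjugate rather than independent, so a naive application of the complex-Gaussian Fisher information would double-count. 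The cleanest route is a unitary change of basis that pairs harmonics $\pm n$ into real cosine/sine channels, reducing the problem to a real Gaussian vector, computing $(\partial_{\theta^*}\V\mu)^T \M C^{-1}(\partial_{\theta^*}\V\mu)$ there, and then reassembling into the complex expression $2\,\mathrm{Re}\{(\partial_{\theta^*}\V\mu)^H \M C^{-1}\partial_{\theta^*}\V\mu\}$ of (22).
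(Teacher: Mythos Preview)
Your proposal is correct and matches the paper's proof almost step for step: Gaussianity from the linear functionals of $S$, the mean via the steerability relation~(14), and the covariance via Parseval in polar coordinates against the $\omega^{-2\gamma}$ spectrum are exactly what the paper does in Appendix~C.

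The only place you diverge is in the Fisher-information step. The paper does not carry out the real/complex reduction you sketch; it simply invokes the complex Slepian--Bangs formula from \cite[B.3.25]{StoicaSpectral} and notes that the covariance term drops because $\M C$ is independent of $\theta^*$. Your proposed route---pair the $\pm n$ harmonics into real cosine/sine channels and apply the real-Gaussian formula---is a valid and more self-contained derivation of the same expression, and it makes explicit the circularity/properness issue that the paper leaves implicit (and later sidesteps by restricting to $H_+$). So your argument is slightly more detailed than the paper's at this point, but not a different strategy.
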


The proof of Theorem \ref{prop:Statsq} is postponed to Appendix \ref{sec:proofGaussianvector}. The condition \eqref{eq:conditionxialpha} is necessary to make the problem well-posed, because $\langle S , \xi_\alpha \rangle$ is well-defined under this condition. It is related to the existence of vanishing moments for $\xi_\alpha$ (see Appendix \ref{sec:noise}).
 It is important to notice that the mean $\V \mu (\theta^*)$ of $\V q$ depends on $\theta^*$, while the covariance $\M C$ does not. This is due to the assumption that the background signal is statistically isotropic.

\subsection{Maximum Likelihood Estimator}
\label{sec:MLE}

In this paper, we consider estimators of the form \eqref{eq:tildetheta}, where $\xi$ is a steerable filter given by $\xi = \sum_\alpha c_\alpha \xi_\alpha$. We show here that, under some assumptions, this corresponds to the maximum likelihood estimator for an adequate choice of the $c_\alpha$. More precisely, one selects the $c_\alpha$ such that $\xi$ corresponds to the orthogonal projection of the pattern $J$ to the basis filters ($\xi_\alpha$).

\begin{proposition} \label{prop:MLE}
	We assume that the filters $\xi_\alpha$ have distinct harmonics ($n_\alpha \neq n_\beta$ for $\alpha \neq \beta$) and identical radial profile ($\eta_\alpha = \eta$ for every $\alpha$). Then, the maximum likelihood estimator 
	\begin{equation}
		\theta_{\mathrm{MLE}} = \mathop{\text{argmax}}_{\theta_0 \in [0,2\pi)} \mathbb{P} ( \V q | \theta_0 ) 
	\end{equation}
	of the image model \eqref{eq:imagemodel} coincides with \eqref{eq:tildetheta} with
	\begin{equation}
		\xi = \sum_{\alpha} \langle J , \xi_\alpha \rangle \xi_\alpha.
	\end{equation}
\end{proposition}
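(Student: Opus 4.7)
The plan is to reduce the Gaussian log-likelihood provided by Theorem~\ref{prop:Statsq} to the scalar functional appearing in \eqref{eq:tildetheta}, and to read off the coefficients $c_\alpha$ in the process. First, the two hypotheses of the proposition trivialize the covariance formula \eqref{eq:covq}: because $\hat{h}_\alpha = \hat{h}$ is common to all $\alpha$ and the harmonics $n_\alpha$ are pairwise distinct, the Kronecker factor kills the off-diagonal entries, and every diagonal entry equals the single scalar $\sigma^2 = \int_0^{\infty}\omega^{1-2\gamma}\lvert\hat{h}(\omega)\rvert^2\,\mathrm{d}\omega$. Thus $\M C = \sigma^2 \M I$, and the log-density of $\V q$ reduces to $-\sigma^{-2}\lVert \V q - \V\mu(\theta_0)\rVert^2$ up to a $\theta_0$-independent constant.

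Next I would expand the squared norm using the explicit mean in \eqref{eq:meanq}. The critical observation is that $\lVert\V\mu(\theta_0)\rVert^2 = \sum_{\alpha}\lvert\langle J,\xi_\alpha\rangle\rvert^2$ is independent of $\theta_0$, since the phases $\mathrm{e}^{\mathrm{j}n_\alpha\theta_0}$ have unit modulus. Hence the only $\theta_0$-dependent term in the log-likelihood is the cross term $2\,\mathrm{Re}\langle\V q,\V\mu(\theta_0)\rangle$, so that
\begin{equation*}
\theta_{\mathrm{MLE}} = \mathop{\text{argmax}}_{\theta_0\in[0,2\pi)}\; \mathrm{Re}\sum_{\alpha} q_\alpha \,\overline{\langle J,\xi_\alpha\rangle}\, \mathrm{e}^{-\mathrm{j}n_\alpha\theta_0}.
\end{equation*}

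Then I would rewrite the objective of \eqref{eq:tildetheta} for the candidate filter $\xi = \sum_\alpha c_\alpha \xi_\alpha$. Pushing the rotation through the sum and invoking the steerability identity \eqref{eq:rotatexialpha} yields
\begin{equation*}
\langle I, \xi(\M R_{\theta_0}\cdot)\rangle = \sum_\alpha \overline{c_\alpha}\,\mathrm{e}^{-\mathrm{j}n_\alpha\theta_0}\, q_\alpha,
\end{equation*}
which matches the MLE objective exactly when $c_\alpha = \langle J,\xi_\alpha\rangle$. Since the two objectives agree up to the positive factor $1/\sigma^2$ and the real-part operator, they share the same argmax over $\theta_0$, proving the proposition and identifying $c_\alpha$ as the orthogonal projection coefficients of $J$ onto the orthonormal family $(\xi_\alpha)$ afforded by \eqref{eq:scalarxis}.

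The main subtlety I anticipate is bookkeeping for the complex-valued nature of $\xi$: while $I$ is real, the harmonic basis is complex, so \eqref{eq:tildetheta} must be read with an implicit real part (or, equivalently, with the $\xi_\alpha$'s paired in conjugates so that $\xi$ is real-valued). Once this convention is fixed, the equivalence between the MLE cross-term and $\langle I, \xi(\M R_{\theta_0}\cdot)\rangle$ is immediate; without it, one risks entangling the $\theta_0$-independent squared-norm contribution with the cross term and spoiling the argmax identification.
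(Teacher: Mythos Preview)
Your proposal is correct and follows essentially the same route as the paper: reduce $\M C$ to a scalar multiple of the identity via the two hypotheses, drop the $\theta_0$-independent quadratic terms, and match the remaining cross term against the steered inner product \eqref{eq:tildetheta}. You are in fact slightly more careful than the paper about complex conjugation and the implicit real part, which the paper's own derivation glosses over.
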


\begin{proof}
	According to Theorem \ref{prop:Statsq}, conditionally to $\theta^* = \theta_0$, the vector $\V q$ is Gaussian with mean $\V \mu(\theta_0)$ and co-variance matrix $\M C$. Therefore,  $ \mathbb{P} ( \V q | \theta_0 )$ is proportional to $\exp( - \frac{1}{2} (\V q - \V \mu (\theta_0) )^H \M C^{-1} (\V q - \V \mu (\theta_0) )$. Hence,
	\begin{align}
		\theta_{\mathrm{MLE}} &= \mathop{\text{argmin}}_{\theta_0 \in [0,2\pi)} (\V q - \V \mu (\theta_0) )^H \M C^{-1} (\V q - \V \mu (\theta_0) ) .
	\end{align}
	The harmonics of the $\xi_\alpha$ being distinct, $\M C$ is diagonal. Moreover, $\M C[n_\alpha,n_\alpha]$ depends only on $\eta_\alpha = \eta$, and is therefore independent of $\alpha$. This means that 
		\begin{align}
		\theta_{\mathrm{MLE}} &= \mathop{\text{argmin}}_{\theta_0 \in [0,2\pi)} (\V q - \V \mu (\theta_0) )^H \M  (\V q - \V \mu (\theta_0) ) .
	\end{align}
	One can develop this latter expression. We remark that $\V q^H \V q$ and $\V \mu(\theta_0)^H \V \mu (\theta_0) = \sum_{\alpha} \langle J, \xi_\alpha\rangle^2$ are independent of $\theta_0$. Therefore, we have finally,
	 		\begin{align}
		\theta_{\mathrm{MLE}} &= \mathop{\text{argmin}}_{\theta_0 \in [0,2\pi)}  (-2 \V q^H \V \mu (\theta_0)) \nonumber \\
		&= \mathop{\text{argmax}}_{\theta_0 \in [0,2\pi)} \sum_\alpha q_\alpha \langle J , \xi_\alpha\rangle \mathrm{e}^{\mathrm{j} n_\alpha \theta_0}.
	\end{align}
	Moreover, recalling  that $q_\alpha = \langle I , \xi_\alpha \rangle$, we can rewrite \eqref{eq:tildetheta} as 
\begin{align}
\tilde{\theta} &= \mathop{\text{argmax}}_{\theta_0 \in [0,2\pi)}{
\left\langle I, \sum_{\alpha} c_\alpha \ue^{\uj n_\alpha  \theta_0} \xi_\alpha \right\rangle} \notag\\
&= \mathop{\text{argmax}}_{\theta_0 \in [0,2\pi)}{ \sum_{\alpha} q_\alpha {c_\alpha} \ue^{\uj n_\alpha \theta_0} },
\label{estimatorEq}
\end{align}
Hence, $\tilde{\theta} = \theta_{\mathrm{MLE}}$ as soon as $c_\alpha = \langle J , \xi_\alpha \rangle$, as expected. 
\end{proof}

The angle $\tilde{\theta}$ in \eqref{estimatorEq} is numerically computed with standard optimization techniques or, simply, with a grid search.
Finally, $\tilde{\theta}$ is an estimator of the form \eqref{eq:tildetheta}, which depends only on the measurements $q_\alpha$, and is based on the steerable filter $\xi$ that provides the best approximation of the pattern $J$ in span $\left\{\xi_\alpha\right\}$.

We shall now use the CRLB \eqref{eq:CRLBdef} together with the Fisher information \eqref{eq:FIq} to evaluate the optimal performance achievable by an unbiaised estimator.
We shall give there empirical evidence that the proposed estimator $\tilde{\theta} = \theta_{\mathrm{MLE}}$ is unbiaised and show that it empirically achieves the CRLB in simulations.

\section{Cram\'er-Rao Lower Bound for an Estimation with Distinct Harmonics}
\label{sec:crlb1}

\subsection{The Cram\'er-Rao Lower Bound}
\label{finite}

In this section, we derive the CRLB in cases where the measurement functions defined in \eqref{eq:polsep} have distinct harmonics $n_\alpha$. So that $n_\alpha =n$, we shall consequently index the measurement functions directly by their harmonics $n\in H$, where $H\subset\ZZ$ is the set of all used harmonics. Specifically, we write
\begin{equation} \label{eq:coefficients}
	\xi_n(r,\theta) 
	= 
	\eta_n(r) \ue^{\uj n \theta}
\end{equation}
and, in the Fourier domain,
\begin{equation}\label{eq:xinF}
  \hat{\xi}_n(\omega,\varphi) 
  = 
  \hat{h}_n(\omega) \ue^{\uj n \varphi}.
\end{equation}

We recall that for $n\neq m$, $\xi_n$ and $\xi_m$ are orthogonal, due to the orthogonality of their angular factors (see \eqref{eq:scalarxis}).
%
Moreover, we have that, for a real image $I$,
\begin{equation}
	q_{-n} 
	= 
	\langle I, \xi_{-n}\rangle 
	= 
	\langle I, \overline{\xi_n} \rangle 
	= 
	\overline{\langle I,\xi_n\rangle} 
	= 
	\overline{q_n},
\end{equation}
where we used the fact that $\eta_n$ is real. 
Thus, $q_{-n}$ and $q_n$ essentially carry the same information, so that the CRLB based on $q_n$, $n\in H$, is the same as the CRLB based on $q_n$ with harmonics $n$ in the set
\begin{equation}
	H_+ 
	= 
	\left\{|n| : n\in H\right\}.
\end{equation}
We further exclude $n=0$ from consideration since the corresponding measurement does not depend on the rotation angle $\theta^*$.
We remark that the correlation matrix $\M C$ in \eqref{eq:covq} is diagonal due to the distinct nature of the harmonics. As a consequence, the CRLB does not depend on $\theta^*$, as is easily deduced from Proposition \ref{prop:Statsq}. 

We now calculate the CRLB for estimating the angle $\theta^*$ of the pattern in \eqref{eq:imagemodel}.
\begin{theorem}
\label{Th:distinctH}
For measurements using distinct harmonics $n\in H$, 
the exact form of the CRLB is
\begin{equation}
	\mathrm{CRLB} 
	=  
	\frac{\sigma_0^2 / 4\pi}{\sum\limits_{n \in  H_+} 
	\frac{n^2\abs{u_{n}}^2}{\int_0^{\infty} 
	\omega^{1 - 2\gamma} 
	\left\vert \hat{h}_n(\omega)\right\vert ^2 \ud\omega}},
\label{OneScaleCRLB}
\end{equation}
where we recall that $u_n = \langle J , \xi_n\rangle$. 
\end{theorem}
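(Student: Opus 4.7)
The plan is to specialize Theorem~\ref{prop:Statsq} to the setting of distinct harmonics, where the covariance matrix is diagonal and the quadratic form defining the Fisher information collapses into a scalar sum.

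First, I would compute $\mathrm{d}\V\mu/\mathrm{d}\theta^*$ from \eqref{eq:meanq}. With $n_\alpha = n$ and $u_n = \langle J, \xi_n\rangle$, the component $\mu_n(\theta^*) = \mathrm{e}^{\mathrm{j} n \theta^*} u_n$ has derivative $\mathrm{j} n \mathrm{e}^{\mathrm{j} n \theta^*} u_n$, so that $|\mathrm{d}\mu_n/\mathrm{d}\theta^*|^2 = n^2 |u_n|^2$, independent of $\theta^*$; this already anticipates that the resulting bound is angle-independent, as the paper observed on the grounds of isotropy.

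Second, the distinct-harmonics hypothesis turns the Kronecker factor $\delta[n_\alpha - n_\beta]$ in \eqref{eq:covq} into a pure diagonal structure, so $\M C^{-1}$ is diagonal with entries $1/\M C_{n,n}$, where $\M C_{n,n}$ is proportional to $\int_0^{\infty} \omega^{1-2\gamma}|\hat h_n(\omega)|^2 \mathrm{d}\omega$, the proportionality constant involving $\sigma_0^2$ being fixed by the noise model of Appendix~\ref{sec:noise}. The Hermitian form in \eqref{eq:FIq} then reduces to $\sum_{n \in H} n^2 |u_n|^2 / \M C_{n,n}$, which is already real-valued so that the $\mathrm{Re}(\cdot)$ is inert.

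Third, I would fold the negative indices. Since $I$ and $\eta_n$ are real, $\xi_{-n} = \overline{\xi_n}$; hence $u_{-n} = \overline{u_n}$ and $|\hat h_{-n}|^2 = |\hat h_n|^2$, so the $n$- and $(-n)$-contributions coincide, while the $n=0$ term drops out because of the $n^2$ prefactor. Summing over $H$ therefore amounts to twice the sum over $H_+$, and taking the reciprocal yields the stated formula.

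The routine linear algebra is clean; the only delicate step is assembling the overall normalization $\sigma_0^2/(4\pi)$. It must be traced through (i) the Parseval-type normalization of the isotropic self-similar noise spectrum $\propto \sigma_0^2\|\V\omega\|^{-2\gamma}$ from Appendix~\ref{sec:noise}, (ii) the factor $2\pi$ produced by the polar angular integration in the expression for the noise covariance of polar-separable filters, and (iii) the combined factor $2$ coming from $2\,\mathrm{Re}(\cdot)$ in \eqref{eq:FIq} and from the folding of positive and negative harmonics (exactly one of these contributes, depending on whether the complex measurement vector is indexed over $H$ or over the independent components in $H_+$). Once this bookkeeping is carried out, \eqref{OneScaleCRLB} follows directly.
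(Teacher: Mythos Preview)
Your proposal is correct and follows essentially the same route as the paper's proof: specialize the Fisher information formula of Theorem~\ref{prop:Statsq} to the diagonal-covariance case, differentiate the mean, reduce the quadratic form to a sum, and pass from $H$ to $H_+$. The paper restricts to $H_+$ \emph{before} invoking \eqref{eq:FIq} (so the complex Slepian--Bangs factor $2\,\mathrm{Re}(\cdot)$ supplies the single factor of $2$), whereas you reason about folding after summing over $H$; your parenthetical remark that ``exactly one of these contributes'' shows you have the right picture, but be careful: including both $q_n$ and $q_{-n}=\overline{q_n}$ makes the complex vector improper, so \eqref{eq:FIq} no longer applies verbatim, and the cleanest bookkeeping is exactly the paper's choice of indexing by $H_+$ from the outset.
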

The proof of Theorem \ref{Th:distinctH} is given in Appendix \ref{sec:distinctHproof}.

\subsection{Conventional Detector: Estimation From Best $N$ Measurements}
\label{bestN}

In the case of conventional detectors, one chooses the same radial pattern $\hat{h}$ for all measurement functions \cite{NicolasSIAM}, which results in
\begin{equation}
  	\hat{\xi}_n(\omega,\varphi) 
  	= 
  	\hat{h}(\omega)\ue^{\uj n\varphi}.
\end{equation}
This radial component is typically band-pass, given its vanishing moments and finite energy. In practice, the choice of the radial profile (bandpass filter $\hat{h}$) specifies the scale of the detector. 

Now we address the question of selecting the best steerable subspace for specifying our steerable matched-filter detector.
Suppose we have at our disposal the finite number $N$ of harmonics (\textit{i.e.}, a finite number of measurements). It is then natural to ask which harmonics to consider to reduce the CRLB as much as possible. This is obtained after an immediate corollary of Theorem \ref{Th:distinctH}.

\begin{corollary}\label{Th:allHfixedS}
The CRLB for the estimation problem in the case of a single common radial profile is
\begin{equation}\label{eq:CRLBsinglescale}
	\mathrm{CRLB} 
	=  
	\frac{1}{\sum\limits_{n\in H_+} n^2 
	\abs{u_{n}}^2} 
	\frac{\sigma_0^2}{4\pi} 
	\int_0^{\infty} 
	\omega ^{1 - 2\gamma} 
	\left\vert \hat{h}(\omega)\right\vert ^2 \ud\omega.
\end{equation}
\end{corollary}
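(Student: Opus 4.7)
The corollary is an immediate specialization of Theorem \ref{Th:distinctH} to the case of a common radial profile, so the plan is simply to substitute and simplify rather than to reopen the Fisher-information computation.

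Concretely, I would start from the general formula
\begin{equation*}
\mathrm{CRLB} = \frac{\sigma_0^2/4\pi}{\sum_{n\in H_+} \frac{n^2|u_n|^2}{\int_0^\infty \omega^{1-2\gamma}|\hat{h}_n(\omega)|^2\,\ud\omega}}
\end{equation*}
given by Theorem \ref{Th:distinctH}, and impose the assumption of the conventional-detector regime, namely $\hat{\xi}_n(\omega,\varphi)=\hat{h}(\omega)\ue^{\uj n\varphi}$ for every $n\in H_+$, so that $\hat{h}_n=\hat{h}$ is independent of $n$. This makes the denominator integral $\int_0^\infty \omega^{1-2\gamma}|\hat{h}(\omega)|^2\,\ud\omega$ a constant that I can pull out of the sum over $n$.

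Factoring this constant out of the summation in the denominator and then inverting to bring it to the numerator yields
\begin{equation*}
\mathrm{CRLB} = \frac{1}{\sum_{n\in H_+} n^2|u_n|^2}\,\frac{\sigma_0^2}{4\pi}\int_0^\infty \omega^{1-2\gamma}|\hat{h}(\omega)|^2\,\ud\omega,
\end{equation*}
which is exactly \eqref{eq:CRLBsinglescale}. I would also briefly note that the integral is finite by the hypothesis \eqref{eq:conditionxialpha} of Theorem \ref{prop:Statsq} applied to $\hat{h}$, so no well-posedness issues arise.

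There is essentially no obstacle here; the only care needed is to verify that the distinct-harmonics hypothesis of Theorem \ref{Th:distinctH} still holds (it does, since conventional detectors are precisely indexed by a set of distinct integer harmonics) and that the summation variable $n$ ranges over $H_+$ as in the theorem, ensuring that the $n=0$ component—which carries no angular information—is correctly excluded from the denominator.
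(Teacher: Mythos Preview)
Your proposal is correct and matches the paper's own argument: the paper simply notes that Corollary \ref{Th:allHfixedS} follows by setting all $\hat{h}_n$ equal to $\hat{h}$ in \eqref{OneScaleCRLB}, which is precisely the substitution and factoring you describe.
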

Corollary \ref{Th:allHfixedS} simply exploits the fact that the $\hat{h}_n$ are all equal to $\hat{h}$ in \eqref{OneScaleCRLB}.

Considering \eqref{eq:CRLBsinglescale}, choosing the best measurements is equivalent to identifying the set $H_+\subset\ZZ_+$ with $N$ members, for which the sum
\begin{equation}\label{summaximized}
  	\sum_{n\in H_+} 
  	n^2|u_n|^2
\end{equation}
is maximized.

\subsection{Conventional Detector: Asymptotic Behavior}
\label{asymptotic}

Similarly to Section \ref{bestN}, we fix the radial part of $\xi_n$  as being independent of $n$ and look at the asymptotic behavior in terms of the number of harmonics, by choosing the set $H_+$ of harmonics as $\{1,\ldots,N\}$, and letting $N$ tend to infinity.
Based on (\ref{eq:CRLBsinglescale}), we conclude that the asymptotic behavior of the CRLB depends on the asymptotic (decay) properties of the coefficients 
$u_n = \langle J,\xi_n\rangle$ of the directional pattern.

The question is whether the CRLB vanishes asymptotically as $N\to\infty$, which would suggest the theoretical possibility of perfect estimation with infinitely many measurements. 
We now study this question.

We start with a preliminary remark. By assumption, the pattern $J$ has finite energy. Since the measurement functions $\{ \xi_n \}_{n\in\ZZ}$ are orthonormal, by the Bessel inequality, we have that
\begin{equation}
	\sum_{n\in\ZZ} 
	\abs{u_{n}}^2 
	\leq 
	\norm{J}_2^2 
	< 
	\infty,
\end{equation}
hence $(u_{n})_{n\in \ZZ} \in \ell_2 (\ZZ)$. We see now how to refine this latter condition to have a vanishing CRLB.

\begin{proposition} \label{prop:asymptoticsinglescale}
	In the framework of conventional detectors, for a pattern $J \in L_2(\RR^2)$, the CRLB does not vanish when $N\rightarrow \infty$ if and only if
	\begin{equation} \label{eq:sumfiniteCRLB}
		\sum_{n \in \ZZ} n^2 \lvert u_n \rvert_2^2 < \infty.
	\end{equation}
	Moreover, the $u_n$ are the Fourier coefficients of the function 
	\begin{equation} \label{eq:defG}
  	G (\varphi) 
  	= 
  	\frac{1}{2\pi}
  	\int_0^\infty 
  	\overline{  \hat{J} (\omega,\varphi)}
  	\hat{h}(\omega)\omega
  	\ud\omega.
\end{equation}
Therefore, the CRLB does not vanish  if and only if the function $G$ is differentiable with its derivative being square integrable. 
\end{proposition}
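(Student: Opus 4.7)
The plan is to read everything off Corollary~\ref{Th:allHfixedS} and then reduce the condition to a classical statement about Fourier coefficients on the circle. Writing $H_+ = \{1,\ldots,N\}$ and plugging into \eqref{eq:CRLBsinglescale}, the CRLB takes the form $\mathrm{CRLB}_N = K / \sum_{n=1}^N n^2\,|u_n|^2$, where
\[
K = \frac{\sigma_0^2}{4\pi}\int_0^\infty \omega^{1-2\gamma}\lvert\hat{h}(\omega)\rvert^2\,\ud\omega
\]
is a positive constant independent of $N$. Hence $\mathrm{CRLB}_N$ fails to vanish as $N\to\infty$ if and only if $\sum_{n=1}^{\infty} n^2 \lvert u_n\rvert^2 < \infty$. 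Since $J$ is real and $\eta_n$ is real, $u_{-n}=\overline{u_n}$, so $\lvert u_{-n}\rvert=\lvert u_n\rvert$, and $\sum_{n\in\ZZ} n^2 |u_n|^2 = 2\sum_{n\ge1} n^2 |u_n|^2$; this is exactly condition \eqref{eq:sumfiniteCRLB}.

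Next I would identify the $u_n$ with the Fourier coefficients of $G$. Applying Parseval's identity to the 2D Fourier transform, passing to polar coordinates (using that $\hat{\xi}_n(\omega,\varphi)=\hat{h}(\omega)\ue^{\uj n\varphi}$), and swapping the order of integration gives
\begin{align*}
u_n &= \langle J,\xi_n\rangle = \frac{1}{(2\pi)^2}\int_{\RR^2}\hat{J}(\V\omega)\overline{\hat{\xi}_n(\V\omega)}\,\ud\V\omega \\
&= \frac{1}{2\pi}\int_0^{2\pi}\overline{G(\varphi)}\,\ue^{-\uj n\varphi}\,\ud\varphi,
\end{align*}
where $G$ is the function in \eqref{eq:defG}. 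Hence $|u_n|^2 = |c_{n}|^2$ where $c_n$ denotes the $n$th Fourier coefficient of $\overline{G}$ on the circle; in particular $\sum_{n\in\ZZ} n^2 |u_n|^2 = \sum_{n\in\ZZ} n^2 |c_n|^2$.

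Finally I would invoke the classical Sobolev characterization on the torus: for $G\in L_2(\mathbb{T})$, the series $\sum_n n^2 |c_n|^2$ is finite if and only if $G$ admits a (weak) derivative in $L_2([0,2\pi))$, and in that case $\sum_n n^2|c_n|^2 = \frac{1}{2\pi}\int_0^{2\pi}|G'(\varphi)|^2\,\ud\varphi$ by Parseval. Combining this with the equivalence established in the first paragraph completes the proof. The only substantive step is justifying the Fubini swap that produces $G$; this follows from $J\in L_2(\RR^2)$ together with the integrability of $\omega\mapsto\hat{h}(\omega)$ against $\omega\,\ud\omega$ on the support of interest (equivalently, from $\xi_n\in L_2$ and the Cauchy–Schwarz inequality in the Fourier domain), so no additional machinery is required.
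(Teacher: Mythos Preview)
Your proof is correct and follows essentially the same approach as the paper's: both invoke Corollary~\ref{Th:allHfixedS} to reduce the question to the finiteness of $\sum_n n^2|u_n|^2$, identify the $u_n$ via Parseval in polar coordinates as Fourier coefficients of $G$ (or $\overline{G}$), and then appeal to the standard equivalence between $\ell_2$-summability of $(n u_n)$ and square-integrability of the weak derivative of $G$. Your version is slightly more careful about conjugation and the Fubini step, but there is no substantive difference.
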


\begin{proof}
	The CRLB is inversely proportional to \eqref{eq:sumfiniteCRLB}, implying the first equivalence. We then remark that, using Parseval relation and polar coordinates, we have
	\begin{align}\label{eq:unGn}
	u_n 
	&= 
	\langle J,\xi_n\rangle 
	= 
	\frac{1}{(2\pi)^2}\langle  
	\hat{J} ,\hat{\xi}_n\rangle \notag\\
  	&= 
  	\frac{1}{(2\pi)^2}
  	\int_0^{2\pi} 
  	\int_0^\infty \overline{  \hat{J} (\omega,\varphi)}\hat{h}(\omega)
  	\ue^{\uj n\varphi}\omega
  	\ud\omega
  	\ud\varphi 
  	\notag\\
  	&= 
  	\frac{1}{2\pi}
  	\int_0^{2\pi}  
  	G (\varphi)
  	\ue^{\uj n\varphi}
  	\ud\varphi.
\end{align}
The sequence $(u_n)$ is in $\ell_2(\ZZ)$, therefore $G$ is square integrable. The Fourier coefficients of the (weak) derivative of $G$ are then $\mathrm{j} n u_n$.
Hence, $G$ is differentiable with a square integrable derivative if and only if the sequence $(\mathrm{j} n u_n)$ is in $\ell_2(\ZZ)$, proving the second equivalence in Proposition \ref{prop:asymptoticsinglescale}. 
\end{proof}

The convergence of the series in \eqref{eq:CRLBsinglescale} therefore depends on the decay properties of the Fourier-series of $ G (\varphi)$, which in turn is related to its smoothness.
In particular, if $ \hat{J}$ has angular jump discontinuities that are inherited by $ G $, $u_n$ will decay slowly like $1/n$, and the series will diverge. In this case, the CRLB will asymptotically vanish. This, for instance, happens if $ \hat{J} $ has jump discontinuities along infinite radial lines, which typically goes along with a similar discontinuity in $J$ (see \cite{Smith1973starfunction}). This suggests that, for such patterns, the angular error of a steerable detector can be made arbitrarily small by selecting a sufficient number of harmonics.

\section{Experiments}
\label{experiments}

In practice, the radial profil $\hat{h}$ is typically chosen as the Laplacian of a Gaussian (LoG) or a band-pass filter (even and compactly supported in the frequency domain). 
For the experiments, we use the LoG filter and the first scale of the Meyer-type profile \cite{daubechies1992ten}

\begin{align}
\hat{h}(\omega) 
= 
\begin{cases}
\sin
\left(\frac{\pi}{2}\nu\left(\frac{4\omega}{\pi}-1\right)\right), 
& \frac{\pi}{4} < \omega \leq \frac{\pi}{2}\\
\cos
\left(\frac{\pi}{2}\nu\left(\frac{2\omega}{\pi}-1\right)\right), 
& \frac{\pi}{2} < \omega \leq \pi\\
0, 
& \text{otherwise}\\
\end{cases},
\label{eq:Meyer}
\end{align}
with the auxiliary function $\nu(t) = t^4(35 - 84t + 70t^2 - 20t^3)$, $\nu \in C^3([0,1])$.

Other typical examples include Shannon-type \cite{DNU}, Simoncelli \cite{PortillaParametric}, Papadakis \cite{Papadakis} and Held \cite{held2010steerable} wavelets. 
Similar results are obtained when using these radial profiles  due to the fact that all these functions approximate the indicator function $\left[ \pi/4, \pi/2 \right]$. 

\subsection{CRLB for Analytical Patterns}
\label{CRLBAnEx}

In this section, we compute the CRLB associated with a few directional patterns for which explicit formulas are provided. We study four different types of patterns. Specifically,
\begin{align}
\hat{J}_{1}(\omega,\varphi) &=
\begin{cases}
1, & \text{if} \quad \cos(1.5 \varphi)^{\beta} > 0.8 \\
0, & \text{otherwise}
\end{cases}
\label{J1} \\
\hat{J}_{2}(\omega,\varphi) &=  \left(\frac{1}{1+\omega^\lambda}\right)\cos(1.5 \varphi)^{\beta}
\label{J2} \\
\hat{J}_{3}(\omega,\varphi) &=
\begin{cases}
1, & \text{if} \quad \cos(2 \varphi)^{\beta} > 0.8 \\
0, & \text{otherwise}
\end{cases}
\label{J3}
\\
\hat{J}_{4}(\omega,\varphi) &=  \cos(2 \varphi)^{\beta} \ue^{-\sfrac{1}{\alpha \omega}}
\label{J4}
\end{align}
within the support of $\hat{h}(\omega)$, with $\lambda = 2.1$, $\beta = 28$ and $\alpha = 2.5$.

We are interested in the following quantities: $\vert u_n \vert$, $n \vert u_n \vert$ (as it determines the decay rate of the CRLB), and the CRLB for a fixed number of harmonics. For computing the CRLB, we apply three different strategies: ``First $N$''; ``Best $N$''; and ``$k$-fold''. In the first case, we use the first $N$ coefficients; in the Best $N$ case, we select the harmonics that maximize (\ref{summaximized}). In the case of $k$-fold symmetric patterns we choose the first $N$ multiples of $k$ as harmonics. This latter choice accounts for the name ``$k$-fold''.

Related to (\ref{OneScaleCRLB}), we have chosen the value $\sigma_0 =1$ since it provides only a scaling factor and does not influence the decay of the curve.
Figure \ref{im:analytic3} contains an illustration of the results.

\begin{figure}[!t]
\centering
\subfloat{\includegraphics[width = 0.48\textwidth]{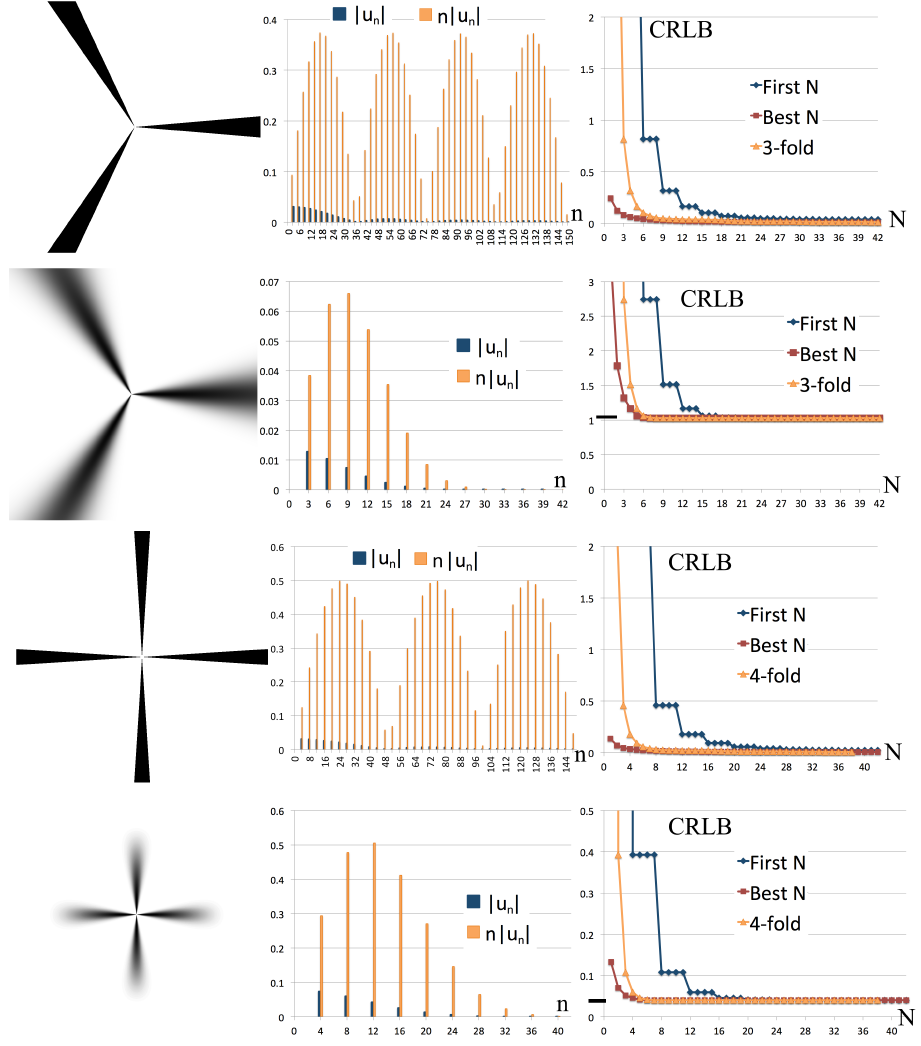}}
\caption{
First column from top to bottom: 
illustration of the analytically defined patterns $\hat{J}_1$ (\ref{J1}), $\hat{J}_2$ (\ref{J2}), 
$\hat{J}_3$ (\ref{J3}) and $\hat{J}_4$ (\ref{J4}) in the Fourier domain. 
Second column: 
$\vert u_n \vert$ and $n \vert u_n \vert$ as a function of harmonics ($n$). 
Third column: 
The CRLB as a function of the number of harmonics.}
\label{im:analytic3}
\end{figure}

In the case of sharp edges ($\hat{J}_1$ (\ref{J1}) and $\hat{J}_3$ (\ref{J3})), the rate of decay of the circular-harmonic coefficients permits a theoretical vanishing limit for the CRLB.
In the smooth angular cases ($\hat{J}_2$ (\ref{J2}) and $\hat{J}_4$ (\ref{J4})), the CRLB converges to a theoretical positive value. 

Moreover, as expected, for the three-fold patterns, only every third component and for the four-fold patterns every fourth element plays a significant role in the estimation of the orientation. This can be seen in the almost flat CRLB curve between multiples of three (or four, respectively) in the ``First $N$'' strategy. We also observe a difference in performance between generically choosing the first $N$ $k$-fold symmetric coefficients (as a strategy for unknown $k$-fold patterns) and making our choice of harmonics based on maximizing (\ref{summaximized}).

Finally, by looking at Figure \ref{im:analytic3}, we observe that, with the right choice of harmonics, the CRLB can be much reduced, even with a small number of harmonics.

\begin{figure*}[!t]
\centering
\subfloat{
\includegraphics[width=.95\textwidth]{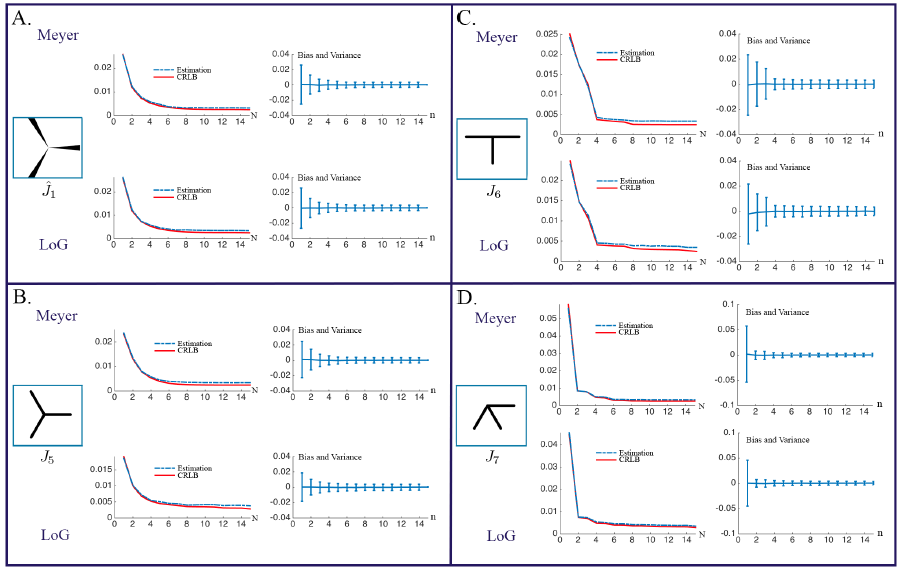}
}
\caption{Accuracy of the orientation estimation of the proposed estimator (dashed line) compared to the CRLB (continuous line), bias and variance of the proposed estimator. 
A.: Results on the analytically defined junction $\hat{J}_1$ (\ref{J1}); top: with Meyer radial detector profile, bottom: with LoG detector profile. B. Results on a three-fold symmetric junction (drawn), top: with Meyer radial detector profile, bottom: with LoG detector profile. C. Results on a T-shaped junction (drawn); top: with Meyer radial detector profile, bottom: with LoG detector profile. D. Results on an arrow-shaped junction (drawn); top: with Meyer radial detector profile, bottom: with LoG detector profile.}
	\label{im:estimator}
\end{figure*}

\subsection{Empirical Optimality of the Proposed Estimator}

In this section, we test the accuracy of the estimation of the proposed estimator (\ref{estimatorEq}) on two symmetric and two asymmetric patterns. For symmetric patterns, we have chosen the analytically defined junction $J_1$ (\ref{J1}) and a sharp, drawn three-fold junction (Figure \ref{im:estimator}, $J_5$). For asymmetric patterns, we have chosen a sharp, drawn T-shape junction (Figure \ref{im:estimator}, $J_6$) and a sharp, drawn arrow-type junction (Figure \ref{im:estimator}, $J_7$). The coefficients $c_{\alpha}$ in \eqref{estimatorEq} are obtained in each case as the orthogonal projection of the junction of interest on the measurement functions. 

We chose the variance of the noise such that it corresponds to an SNR of 17.22 dB. The typical SNR range of real images where the quality is still acceptable is 15-35 dB, so that our experimental conditions are representative of a noisy image. We built $1,\!000$ different realizations to make the experiments statistically reliable.

For the estimator, we used every third harmonic. For the radial part of our detector, we have chosen the first scale of the Meyer wavelet profile and the Laplacian of Gaussian (LoG) filter.

The results for the patterns are illustrated in Figure \ref{im:estimator}. 
We first show experimentally that the estimator \eqref{estimatorEq} is unbiased with more than one harmonic. 
The performances are therefore comparable to the CRLB. 
We then observe that the accuracy of the estimator follows closely the CRLB curve, while staying above, as expected.
This proves empirically the almost optimality of the estimator \eqref{estimatorEq} for the estimation of the angle $\theta^*$.


\section{Extension to Wavelets}
\label{sec:crlb2}


\subsection{Steerable Wavelets}

In this section, we extend our results on the CRLB to wavelet detectors. 
Steerable wavelet frames are adapted to capture the local orientation of features, or junctions, within a multiresolution hierarchy. To simplify the notations, we consider wavelets that are centered at the origin.
Moreover, we apply the multiorder complex Riesz transform on a tight wavelet frame of $L_2 \left(\mathbb{R}^2 \right)$.
Proposition 4.1 in \cite{NicolasSIAM} provides sufficient conditions on the isotropic profile that has to be defined to generate a desired wavelet system.
There are various types of isotropic profiles satisfying the proposition; typical examples are given at the beginning of Section \ref{experiments}. For further details, we refer to \cite{DNU}.

Similarly to the design of conventional detectors, we take the values of $n$ from the predefined set of harmonics $H =\{n_{0},\ldots,n_N\}$. The wavelet schemes generated in such a way are often referred to as circular harmonic wavelets \cite{NicolasSIAM,Neri}. 
In an extension of \eqref{eq:coefficients}, the measurement functions are indexed here by the pair $\alpha=(n,i)$ of the harmonic $n\in H$ and the scale $i\in \ZZ$.
In particular, the circular harmonic wavelet $\xi$ at scale $i$ and harmonic channel $n$ takes the form
\begin{align}
\hat{\xi}_{n,i} (\omega,\varphi) = 2^i \hat{h} \left(2^i\omega \right) \ue^{\uj n\varphi}.
\end{align}
We note that, in the case of wavelet measurements, we have multiple measurement functions (at different scales) for the same harmonic $n$. 
We make the additional assumption that $\hat{h}$ is real-valued, which holds true for every aforementioned radial design. Finally, given that the radial patterns $\hat{h}$ of interest are bandpass, we assume that $\hat{h}(\omega) = 0$ for $\omega \notin (\pi/4, \pi ]$ to set the fundamental scale.

\subsection{Multiple Scales: The Cram\'er-Rao Lower Bound}

We recall that the goal is to estimate $\theta^*$ from the measurements $q_{n,i}=\langle I,\xi_{n,i}\rangle$. 
We also denote  $u_{n,i} = \langle J,\xi_{n,i}\rangle$ and $s_{n,i}=\langle S,\xi_{n,i}\rangle$.

To simplify future formulas, we introduce the following notations:
\begin{align}
  \tilde{q}_{n,i} & = 2^{-i\gamma} q_{n,i},\\
  b_z & =  \frac{1}{2\pi}  \int_0^{\infty} \omega^{z} \hat{h}(\omega)^2 \omega \ud\omega,  \label{eq:bzzz}\\
  d_z & =   \frac{1}{2\pi} \int_0^{\infty}  \omega^{z} \hat{h}\left(\omega\right) \hat{h}\left(2\omega\right) \omega \ud\omega.
\end{align}
We also set the two constants
\begin{equation}
\label{eq:BD}
  B = \sigma_0^2 b_{-2\gamma}, \quad
  D = \sigma_0^2  2^{1-\gamma} d_{-2\gamma}.
\end{equation}

The angle ${\theta^*}$  is estimated by steering the whole template, which can be seen as a sum of templates. The task of estimating ${\theta^*}$ based on ${q_{n,i}}$ is the same as its estimation based on  ${\tilde{q}_{n,i}}$. In particular, the CRLB is the same. We first give the covariance matrix of the measurements $\tilde{q}_{n,i}$.

\begin{proposition} \label{prop:covwavelets}
The covariance matrix $\M C$ of the random vector $(\tilde{q}_{n,i})_{n\in \mathbb{Z}, i \in \mathbb{Z}}$ is given by

\begin{equation}
  \M C [(n,i), (m,k)]
  =
  \begin{cases}
    B, & \text{if} \quad m=n, k=i \\
    D, & \text{if} \quad m=n, \left\vert k-i\right\vert =1 \\
    0, & \text{otherwise.}
  \end{cases}
  \label{eq:waveletcov}
\end{equation}

\end{proposition}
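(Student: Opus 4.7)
The plan is to specialize Theorem~\ref{prop:Statsq} to the wavelet family $\hat\xi_{n,i}(\omega,\varphi)=2^i\hat h(2^i\omega)\ue^{\uj n\varphi}$, then to track the rescaling $\tilde q_{n,i}=2^{-i\gamma}q_{n,i}$ through a dyadic change of variables, and finally to invoke the bandpass support $\mathrm{supp}\,\hat h\subset(\pi/4,\pi]$ to wipe out all off-diagonal entries with $|k-i|\ge 2$.

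The case $m\neq n$ is immediate: the Kronecker factor $\delta[n_\alpha-n_\beta]$ in the covariance formula of Theorem~\ref{prop:Statsq} gives $\mathrm{Cov}(q_{n,i},q_{m,k})=0$, and the multiplicative rescaling by $2^{-(i+k)\gamma}$ preserves this zero, yielding the third line of \eqref{eq:waveletcov}. For $m=n$, I would substitute $\hat h_{n,i}(\omega)=2^i\hat h(2^i\omega)$ into the radial integral of Theorem~\ref{prop:Statsq}, multiply by $2^{-(i+k)\gamma}$, and then apply the substitution $u=2^i\omega$ to obtain
\begin{equation}
\M C[(n,i),(n,k)] \;=\; \frac{\sigma_0^2}{2\pi}\,2^{(k-i)(1-\gamma)}\!\int_0^\infty u^{1-2\gamma}\hat h(u)\hat h(2^{k-i}u)\,\ud u.
\end{equation}
The dependence on the scale pair $(i,k)$ has now collapsed to the difference $k-i$: this is precisely the self-similar stationarity along each harmonic axis that the rescaling by $2^{-i\gamma}$ was designed to produce.

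Finally, I would split by $|k-i|$. For $k=i$ the integral reduces to $2\pi\, b_{-2\gamma}$ by \eqref{eq:bzzz}, so the entry equals $B$. For $|k-i|=1$ the scale prefactor is $2^{1-\gamma}$ and the integral is $2\pi\, d_{-2\gamma}$, yielding $D$; the direction $k=i-1$ agrees either by the symmetry of the integrand in $i,k$ (using that $\hat h$ is real) or by Hermitian symmetry of $\M C$. The key step is $|k-i|\ge 2$: for $k>i$, $\hat h(u)$ is supported in $(\pi/4,\pi]$ whereas $\hat h(2^{k-i}u)$ is supported in $(2^{-(k-i)}\pi/4,\,2^{-(k-i)}\pi]\subset(0,\pi/4]$, so the two supports are disjoint up to a measure-zero endpoint and the integral vanishes; the reflection $k<i$ is analogous. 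I anticipate no analytic obstacle: the whole proposition is a direct consequence of Theorem~\ref{prop:Statsq} combined with a dyadic dilation and the bandpass hypothesis on $\hat h$. The only nuisance I expect is the bookkeeping of the $2\pi$ and $\sigma_0^2$ factors so that the numerical coefficients agree with the definitions \eqref{eq:bzzz}--\eqref{eq:BD} on the nose.
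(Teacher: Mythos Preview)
Your proposal is correct and follows essentially the same route as the paper: invoke the covariance formula of Theorem~\ref{prop:Statsq}, kill $m\neq n$ by orthogonality of the angular factors, kill $|k-i|\ge 2$ by the disjoint-support argument on the bandpass $\hat h$, and identify the remaining two cases with $B$ and $D$. The paper's proof is terser---it does not spell out the dyadic change of variables nor the verification of the $|k-i|=1$ symmetry---so your version is in fact more explicit, but the argument is the same.
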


\begin{proof}
First, $\M C [(n,i), (m,k)] = 0$ as soon as $m\neq n$ because the functions $\xi_{n,i}$ and $\xi_{m,k}$ are orthogonal. Then, the assumption that  $\widehat{h} (\omega) = 0$ for $\omega \notin (\pi/4, \pi]$ ensures that $\widehat{h}(2^i \omega)$ and $\widehat{h}(2^k \omega)$ do not overlap for $|k-i|>1$, implying again that $\M C [(n,i), (m,k)] = 0$. 

We now assume that $m=n$ and $|k-i| \leq 1$. When $k=i$, we easily recognize the quantity \eqref{eq:bzzz} with $z=-2\gamma$  in \eqref{eq:covq}  and therefore deduce that $\M C [(n,i), (n,i)] = \sigma^2_0 b_{-2\gamma} = B$. Similarly, one shows using again \eqref{eq:covq} that $\M C [(n,i), (n,k)] = D$ when $|k-i| = 1$.
\end{proof}

The main challenge in computing the Fisher information from \eqref{eq:FIq} for wavelet measurements is that, unlike in context of Section \ref{sec:crlb1}, the covariance matrix of the measurements $\{\tilde{q}_{n,i}\}$ is not diagonal, so it is not as straightforward to invert. But this challenge is still surmountable because, as we see from \eqref{eq:waveletcov}, we can rearrange the measurements such that the covariance matrix is (at most) tridiagonal. The exact rearrangement of the measurements is described in details in Appendix \ref{sec:covmat}.

The key idea is to divide the set of measurement indices into as few disjoint sets $G_r$ as possible ($r=1,\ldots,g$, where $g$ denotes the total number of such sets). Then, each set $G_r$ has some $l_r$ elements of the form $(n_r,i_r),\ldots,(n_r,i_r+l_r-1)$, for some harmonic $n_r\in H_+$ and minimum scale $i_r$. We re-index all measurements as $\tilde{q}_{r|e}$ using the notation 
\begin{equation}
(r|e) := (n_r,i_r+e-1),
\label{eq:rle}
\end{equation}
with $r=1,\ldots,g$ and $e=1,\ldots,l_r$.
Then, the covariance matrix of the measurements $\{\tilde{q}_{r|e}\}$ for each fixed $r$ is Toepliz-tridiagonal of the form
\begin{equation}
  \M T_r =
  \begin{pmatrix}
    B & D  & 0 & 0& \dots  & 0 \\
    D & B  & D  & 0&  \dots & 0 \\
    0 & D & B  & D  & \dots  & 0\\
    \vdots & \vdots & \vdots  & \ddots & \vdots & \vdots \\
        0 & 0 &  \dots & D & B & D \\
    0 & 0 &  \dots & 0 & D & B
  \end{pmatrix}_{l_r\times l_r}.
  \label{eq:Tr}
\end{equation}
Moreover, the overall covariance matrix is block-diagonal, with $\M T_r$, $r=1,\ldots,g$, as its diagonal blocks.
Using this reformulation of the problem, we are able to compute the CRLB explicitly.

\begin{theorem}
  \label{Th:WaveletCRLB}
  For wavelet measurements ${q}_{r|e}$, $r=1,\ldots,g$, $e=1,\ldots,l_r$, the Fisher information is given by
  \begin{align}
     \text{FI}({\theta^*})
     &= 2\sum_{r=1}^g n_r^2 \sum_{t=1}^{l_r} \left( B +2D\cos\left(\frac{t\pi}{l_r+1}\right) \right)^{-1} \notag\\
   & {}\times{} \abs{ \sum_{e=1}^{l_r} 2^{-(i_r+e-1)\gamma} u_{r |  e} \sin \left(e\frac{t\pi}{l_r+1} \right) }^2.
  \end{align}
  The CRLB for the estimation problem is given by $1/\text{FI}(\theta^*)$. It satisfies
  \begin{align}
    \frac{B -  2 \abs{D}}{2\sum_{n,i} n^2 4^{-i\gamma} \abs{u_{n,i}}^2} \leq \mathrm{CRLB} \leq \frac{B+2\left\vert D\right\vert }{2\sum_{n,i} n^2 4^{-i\gamma} \left\vert u_{n,i}\right\vert ^2}.
    \label{CRLBwavelet}
  \end{align}
\end{theorem}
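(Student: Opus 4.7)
The plan is to apply the Fisher information formula \eqref{eq:FIq} to the rescaled measurement vector $(\tilde{q}_{r|e})$, using the block structure of its covariance matrix and the classical diagonalization of each block. Because $\tilde{q}_{n,i}=2^{-i\gamma}q_{n,i}$ is a deterministic affine rescaling, the CRLB for estimating $\theta^*$ from $(q_{n,i})$ coincides with that from $(\tilde{q}_{n,i})$. From \eqref{eq:meanq}, the mean of $\tilde{q}_{n,i}$ is $2^{-i\gamma}\mathrm{e}^{\mathrm{j}n\theta^*}u_{n,i}$, with derivative $\mathrm{j}n\cdot 2^{-i\gamma}\mathrm{e}^{\mathrm{j}n\theta^*}u_{n,i}$ with respect to $\theta^*$. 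By Proposition \ref{prop:covwavelets} together with the reindexing $(r|e)=(n_r,i_r+e-1)$ from \eqref{eq:rle}, the covariance matrix of $(\tilde{q}_{r|e})$ is block-diagonal with blocks equal to the tridiagonal Toeplitz matrix $\M T_r$ of \eqref{eq:Tr}. Consequently $\M C^{-1}$ is block-diagonal with blocks $\M T_r^{-1}$, and
\begin{equation}
\mathrm{FI}(\theta^*)=2\,\mathrm{Re}\sum_{r=1}^g \V v_r^H\M T_r^{-1}\V v_r,
\end{equation}
where $\V v_r=\bigl(\mathrm{j}n_r\mathrm{e}^{\mathrm{j}n_r\theta^*}2^{-(i_r+e-1)\gamma}u_{r|e}\bigr)_{e=1}^{l_r}$.

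To derive the explicit identity, I would invoke the standard diagonalization of the Toeplitz tridiagonal matrix: $\M T_r$ has eigenvalues $\lambda_t=B+2D\cos(t\pi/(l_r+1))$ with an orthonormal eigenbasis whose $t$-th vector has components proportional to $\sin(et\pi/(l_r+1))$ for $e=1,\ldots,l_r$. Expanding $\V v_r$ in this basis gives
\begin{equation}
\V v_r^H\M T_r^{-1}\V v_r=\sum_{t=1}^{l_r}\lambda_t^{-1}\bigl|\hat{\V w}_t^H\V v_r\bigr|^2,
\end{equation}
and substituting the components of $\V v_r$, together with $|\mathrm{j}n_r\mathrm{e}^{\mathrm{j}n_r\theta^*}|=n_r$, produces the announced expression for $\mathrm{FI}(\theta^*)$. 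The quantity is manifestly real and positive, so the $\mathrm{Re}$ is harmless; the only subtlety is carrying the discrete-sine-transform normalization through correctly.

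For the bounds I would bypass the full diagonalization and use the operator inequalities obtained from $|\cos|\le 1$: every eigenvalue of $\M T_r$ lies in $[B-2|D|,B+2|D|]$, hence
\begin{equation}
(B+2|D|)^{-1}\M I\;\preceq\;\M T_r^{-1}\;\preceq\;(B-2|D|)^{-1}\M I
\end{equation}
as positive semidefinite operators (assuming $B-2|D|>0$, which the bounds implicitly require). Sandwiching the quadratic form yields $(B+2|D|)^{-1}\|\V v_r\|^2\le \V v_r^H\M T_r^{-1}\V v_r\le (B-2|D|)^{-1}\|\V v_r\|^2$, and a direct computation gives $\|\V v_r\|^2=n_r^2\sum_{e=1}^{l_r}4^{-(i_r+e-1)\gamma}|u_{r|e}|^2$. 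Summing over the disjoint groups $G_r$ reassembles $\sum_{n,i}n^2 4^{-i\gamma}|u_{n,i}|^2$, and inverting the resulting two-sided bound on $\mathrm{FI}(\theta^*)$ delivers the CRLB inequalities in \eqref{CRLBwavelet}.

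The main obstacle is purely technical: the careful bookkeeping of the rearrangement $(n,i)\mapsto (r|e)$ and the normalization of the sine eigenvectors, so that the constants of the identity come out exactly as stated. Everything else—the block decomposition, the eigendecomposition of a tridiagonal Toeplitz matrix, and the Rayleigh–Ritz bound—is classical and pulls through essentially mechanically once the covariance structure of Proposition \ref{prop:covwavelets} is in hand.
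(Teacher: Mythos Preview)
Your proposal is correct and follows essentially the same route as the paper: block-diagonal reduction via the reindexing $(r|e)$, explicit eigendecomposition of each tridiagonal Toeplitz block $\M T_r$, and Rayleigh--Ritz bounds from the eigenvalue range $[B-2|D|,B+2|D|]$. The one point you leave as an assumption---that $B-2|D|>0$---is established in the paper as a separate lemma (Lemma~\ref{lemma:ab}) via Cauchy--Schwarz and AM--GM on the support of $\hat h$.
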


The proof of Theorem \ref{Th:WaveletCRLB} is given in Appendix \ref{proof:th2}.
The Fisher information is expressed as sums and inverses that are difficult to handle. However, we can give a fair estimate of the CRLB by narrowing it to a range \eqref{CRLBwavelet}. In the case of a conventional detector, the value of D is 0 (see \eqref{eq:BD}). Thus, the lower and upper bound of the range are equal and we arrive to the same exact CRLB that we computed for the case of conventional detectors \eqref{eq:CRLBsinglescale}. In the general case, it is crucial to remark that $0<B- 2|D|$, as demonstrated in Lemma \ref{lemma:ab} (see Appendix \ref{proof:th2}); hence the lower bound in \eqref{CRLBwavelet} is strictly positive. Finally, we remark that, as in the case of conventional detectors, the Fisher information, and therefore the CRLB, does not depend on $\theta^*$. 


\subsection{Asymptotic Behavior for All Scales and Harmonics}

We now consider the case where we have access to each wavelet coefficient $q_{n,i}=\left\langle I, \xi_{n,i} \right\rangle$ for $n\in\ZZ$, $i\in\ZZ$.
Once again, for real-valued data, we can limit ourselves to $n\in\ZZ_+$ for computing the CRLB, due to Hermitian symmetry.

\begin{proposition}
	\label{prop:asymptoticwavelet}
	Assume $\gamma$ is such that $\frac{b_0}{\abs{d_0}} \neq 2^{1-\gamma} + 2^{1+\gamma}$.
	Then,   by observing $\left\langle I, \xi_{n,i} \right\rangle$, the CRLB does not vanish if and only if 
	\begin{equation}
	 \sum_{i\in \ZZ} 	\sum_{n \in \ZZ^+} n^2 4^{i \gamma} |u_{n,i}|^2 < \infty.
	\end{equation}
	Moreover, if $J$ has an expansion in terms of $\{\xi_{n,i}\}$, then this is also equivalent to
	\begin{align}
(-\Delta)^{\gamma/2} \frac{\partial}{\partial \theta} J(r,\theta) \in L_2(\mathbb{R}^2).
	\end{align}
\end{proposition}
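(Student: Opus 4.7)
The plan is to reduce both equivalences to controlling the sum $\Sigma := \sum_{n\in\ZZ^+,\, i \in \ZZ} n^2 4^{-i\gamma}|u_{n,i}|^2$ that appears in the denominator of the CRLB sandwich of Theorem \ref{Th:WaveletCRLB}. First I would apply that theorem to a finite truncation $A \subset \ZZ^+\times \ZZ$ of the index set and then pass to the limit. Denoting by $\Sigma(A)$ the corresponding partial sum, the sandwich reads $(B-2|D|)/(2\Sigma(A)) \leq \mathrm{CRLB}(A) \leq (B+2|D|)/(2\Sigma(A))$. The hypothesis $b_0 / |d_0| \neq 2^{1-\gamma} + 2^{1+\gamma}$, combined with Lemma \ref{lemma:ab}, ensures $B - 2|D| > 0$, so both ends are strictly positive and of the same order. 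Since adding measurements only increases the Fisher information, $\Sigma(A)$ is monotone in $A$; letting $A \uparrow \ZZ^+ \times \ZZ$ then shows that the limiting CRLB is bounded away from zero if and only if $\Sigma < \infty$, which gives the first equivalence.

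For the second equivalence, I would invoke a Littlewood-Paley / Sobolev-wavelet characterization to identify $\Sigma$ with $\|(-\Delta)^{\gamma/2}\partial_\theta J\|_{L_2(\RR^2)}^2$. A direct polar-coordinate Parseval computation, expanding $\hat J(\omega, \varphi) = \sum_n \hat J_n(\omega)\, \ue^{\uj n\varphi}$ and applying Plancherel in $\varphi$, gives
\[
\|(-\Delta)^{\gamma/2}\partial_\theta J\|_{L_2}^2 = \frac{1}{2\pi} \sum_n n^2 \int_0^\infty \omega^{2\gamma+1} |\hat J_n(\omega)|^2\,\ud\omega.
\]
The key scaling fact is that $\hat\xi_{n,i}$ is radially supported where $\omega \asymp 2^{-i}$, so the Fourier multiplier $\omega^{2\gamma}$ is pointwise comparable to $4^{-i\gamma}$ on that annulus. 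Combined with the tight-frame identity $\|f\|_{L_2}^2 = \sum_{n,i}|\langle f, \xi_{n,i}\rangle|^2$ applied to $f = \partial_\theta J$, together with $\langle \partial_\theta J, \xi_{n,i}\rangle = -\uj n\, u_{n,i}$ (which follows from $\partial_\theta \xi_{n,i} = \uj n\, \xi_{n,i}$ after integration by parts), this yields $\|(-\Delta)^{\gamma/2}\partial_\theta J\|_{L_2}^2 \asymp \Sigma$, hence the stated equivalence.

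The main obstacle will be justifying the Sobolev-wavelet equivalence rigorously in spite of the overlap between consecutive dyadic scales: because $\hat h(2^i\omega)$ and $\hat h(2^{i+1}\omega)$ have intersecting supports, one cannot naively localize the multiplier $\omega^{2\gamma}$ scale by scale. The cleanest workaround is to re-use the block-tridiagonal structure quantifying that overlap (Proposition \ref{prop:covwavelets} and Appendix \ref{sec:covmat}), bound the resulting off-diagonal cross-terms by Cauchy-Schwarz, and appeal to the non-degeneracy condition $b_0/|d_0| \neq 2^{1-\gamma}+2^{1+\gamma}$ to ensure that the cross-terms are strictly dominated by the diagonal ones, so that the two-sided equivalence survives in the limit.
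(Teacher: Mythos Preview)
Your plan for the first equivalence is correct and matches the paper's: it follows directly from the sandwich in Theorem~\ref{Th:WaveletCRLB}. One small misstep: you invoke the hypothesis $b_0/|d_0|\neq 2^{1-\gamma}+2^{1+\gamma}$ to get $B-2|D|>0$, but that positivity is Lemma~\ref{lemma:ab} alone and holds for every $\gamma$. The $b_0/|d_0|$ condition plays no role here.

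For the second equivalence your Littlewood--Paley intuition is right, but the argument as written has a genuine gap, and the gap is precisely where the $b_0/|d_0|$ condition actually enters. The issue is that you work only with the \emph{analysis} coefficients $u_{n,i}=\langle J,\xi_{n,i}\rangle$. Applying the tight-frame identity to $\partial_\theta J$ gives $\sum_{n,i} n^2|u_{n,i}|^2$, which carries no $4^{-i\gamma}$ weight; and the statement ``$\omega^{2\gamma}\asymp 4^{-i\gamma}$ on the support of $\hat h(2^i\cdot)$'' does not let you transfer that weight inside the inner product, since the coefficient $u_{n,i}$ is a pairing, not a local $L_2$-norm, and can vanish for one radial weight while not for another. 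The Cauchy--Schwarz bound on cross-scale terms that you propose is indeed used in the paper, but only \emph{after} passing to the \emph{synthesis} coefficients $v_{n,i}$ coming from the assumed expansion $J=\sum v_{n,i}\xi_{n,i}$. Expanding $\|(-\Delta)^{\gamma/2}\partial_\theta J\|_2^2$ in this basis yields a tridiagonal quadratic form in the $v$'s (with constants $b_{2\gamma},d_{2\gamma}$), and Lemma~\ref{lemma:ab} --- not the $b_0/|d_0|$ hypothesis --- is what makes the diagonal strictly dominate, giving $\|(-\Delta)^{\gamma/2}\partial_\theta J\|_2^2\asymp\sum n^2 4^{-i\gamma}|v_{n,i}|^2$.

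The step you are missing is the passage from $v_{n,i}$ back to $u_{n,i}$. Because the frame is redundant (adjacent scales overlap), $u$ and $v$ differ: one computes $u_{n,i}=2d_0 v_{n,i-1}+b_0 v_{n,i}+2d_0 v_{n,i+1}$, so that $\{2^{-i\gamma}u_{n,i}\}_i$ is the image of $\{2^{-i\gamma}v_{n,i}\}_i$ under the three-tap filter $F(z)=2^{1-\gamma}d_0 z^{-1}+b_0+2^{1+\gamma}d_0 z$. The hypothesis $b_0/|d_0|\neq 2^{1-\gamma}+2^{1+\gamma}$ is exactly the condition that $F$ has no zero on the unit circle, hence is bi-Lipschitz on $\ell_2$, which gives $\sum n^2 4^{-i\gamma}|u_{n,i}|^2\asymp\sum n^2 4^{-i\gamma}|v_{n,i}|^2$. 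That is where the non-degeneracy assumption is really used, not in the Cauchy--Schwarz step.
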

	
	The proof of Proposition \ref{prop:asymptoticwavelet} is provided in Appendix \ref{pr:gen}. 
	It is the multiscale version of Proposition \ref{prop:asymptoticsinglescale}.
	We interpret the fact that the CRLB vanishes as a possibility to perfectly detect the correct angle from the set of measurements, assuming all the harmonics and scales are available.
	The assumption on $\gamma$ is technical and simply implies that one specific value of $\gamma$ should be avoided for a given wavelet. The values of such $\gamma$ for different wavelets are given in Appendix \ref{pr:gen}.
		
\subsection{Experiments with Wavelets}

\begin{figure}[!t]
\centering
\subfloat{
	\includegraphics[width=.48\textwidth]{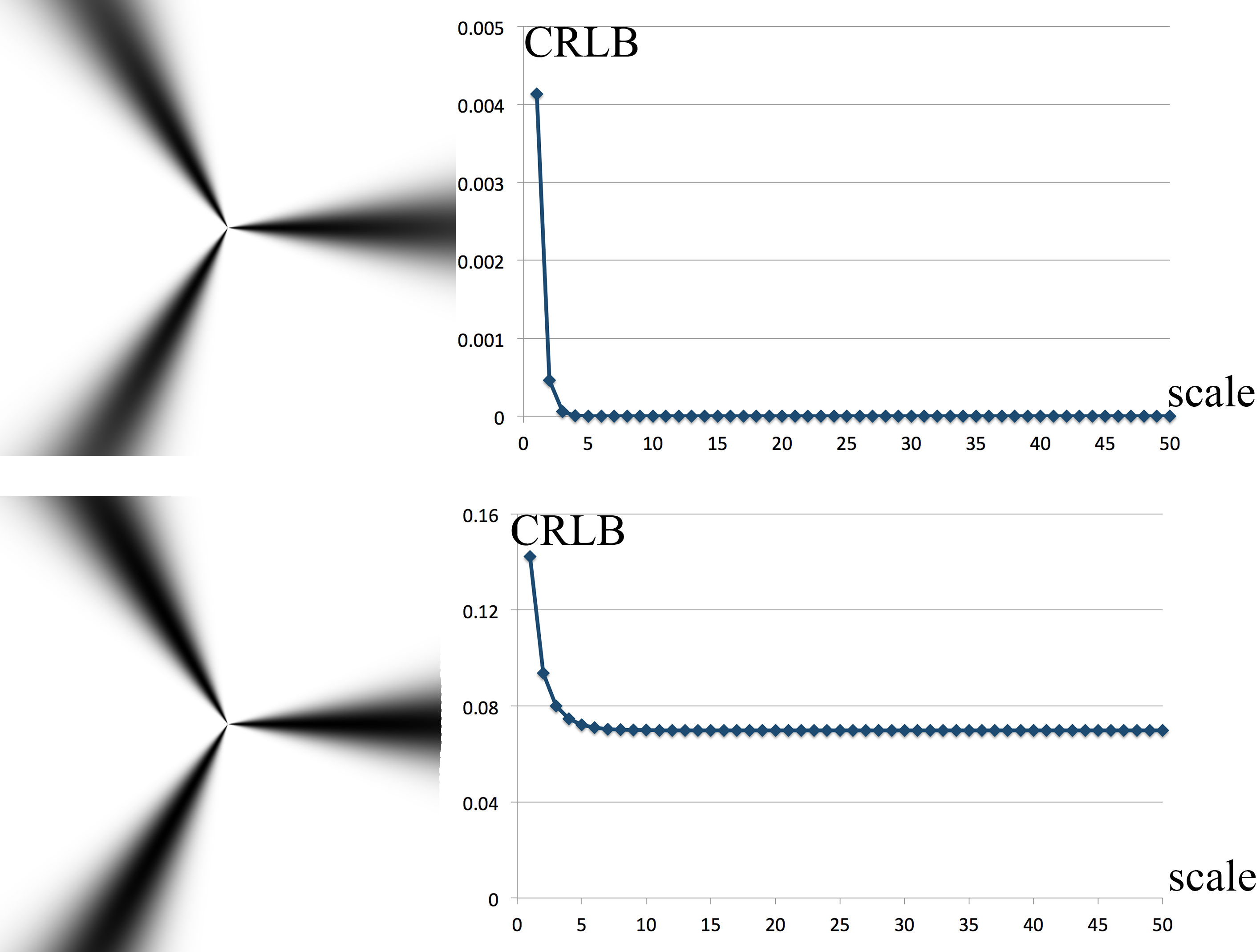}
}
	\caption{First column: Illustration of the analytically defined junction $J_2$ with $\lambda = 2.1$ and 4.5. Second column: The CRLB as a function of the number of the largest wavelet scale.}
	\label{im:uandun2num2}
\end{figure}

In this section, we compute the CRLB of the junction $\hat{J}_2$ \eqref{J2} with $\lambda = 2.1$ and 4.5. These choises of $\lambda$ correspond to two cases: first, where the background decays faster than the junction (in the Fourier domain); and second, where it is the junction that decays faster. The value of $\gamma$ is fixed as 2.5.

For the experiments, we have chosen the Meyer-type wavelet \eqref{eq:Meyer}. We again choose $\sigma_0 = 1$ since it provides a scaling factor that does not influence the rate of decay of the curve.

Based on the graphs of Figure \ref{im:uandun2num2}, we can observe the followings. For $\lambda = 2.1$, one wavelet scale was not enough for the CRLB to converge to a theoretical perfect estimation (like in Figure \ref{im:analytic3}, second experiment). However, by increasing the number of scales, the 0 bound is asymptotically achievable. This illustrates the main result of Proposition \ref{prop:asymptoticwavelet}, which shows that, by using finer scales, wavelets can improve the estimate, even in the case of junctions of a fixed size.
For $\lambda = 4.5$, as expected, the CRLB converges to a positive theoretical value.

\section{Conclusion}

In this paper, we considered the problem of estimating the orientations of features in images. In particular, we examined the orientation of patterns with rotational symmetry.  Within the framework of measurement functions composed of steerable filters, we derived Cram\'er-Rao lower bounds (CRLB) on the error of estimation.
We provided results on the connections to maximum likelihood estimation. 
Moreover, we discussed the problem of selecting the  parameters in the detector functions to achieve the lowest CRLB for a given reference template. 
In addition to the case of conventional detectors, we also studied the bounds on the performance of steerable wavelet estimators. We proposed an estimator for identifying orientations, and we compared its estimation error to the theoretical bounds. Finally, we provided several experiments on different realistic junctions and directional patterns that confirm the theory.

\section*{Acknowledgments}
The research leading to these results has received funding from the European Research Council under the European Union's Seventh Framework Programme (FP7/2007-2013)/ERC grant agreement $\text{n}^\circ$ 267439. The work was also supported by the Hasler Foundation.

\appendix

\subsection{Background Noise Model}
\label{sec:noise}

For our background signal $S$ to fulfil the requirements of self-similarity and isotropy, we define it as the (non-stationary) solution of a fractional stochastic differential equation. The general theory of such models and their non-Gaussian and sparse extensions is covered in \cite{UnserTaftiBook}. 

We assume that our background signal corresponds to a stochastic process $S$ on $\RR^2$ that is defined as the solution of the stochastic differential equation
\begin{align}
(-\Delta)^{\frac{\gamma}{2}} S = W,
\label{eq:Lsw}
 \end{align}
where $W$ is a 2D continuous-domain Gaussian white noise of variance $\sigma^2_0$ and $(-\Delta)^{\frac{\gamma}{2}}$ is the fractional Laplacian operator associated with the isotropic Fourier-domain multiplier $\norm{\V\omega}^{\gamma}$, with $\gamma \geq 0$. 

The intuitive idea here is to shape or ``color'' the white noise by an appropriately defined inverse fractional Laplacian, which gives to the solution an inverse-power-law spectrum.
The field we obtain is the isotropic 2D generalization of the fractional Brownian motion \cite{mandelbrot1968fractional}.

The observation $\langle S , f \rangle$ of the random process $S$ through a suitable test function $f$ is a Gaussian random variable.
The random process $S$ is Gaussian with $0$ mean (because $W$ is) and correlation form 
\begin{equation} \label{eq:correlfg}
	\mathcal{B}_S(f,g) = \mathbb{E} \left[ \langle S ,f \rangle \overline{\langle S , g \rangle} \right].
\end{equation}

The random variable $\langle W,  f \rangle$ is well-defined for $f \in L_2(\RR^2)$ \cite{UnserTaftiBook}. Therefore, $\langle S , f \rangle$ is well-defined when $(-\Delta)^{- \gamma / 2} f \in L_2(\RR^2)$, which is equivalent, in the Fourier domain, to $\int_{0}^{\infty} \omega^{1 - 2\gamma} \int_0^{2\pi} \lvert \hat{f}(\omega ,\varphi) \rvert^2 \mathrm{d} \varphi \mathrm{d} \omega < \infty$. 
This condition yields \eqref{eq:conditionxialpha} when $f = \xi_\alpha$. 
Assume that $f$ has vanishing moments until order $\lfloor \gamma \rfloor - 1$. This is, in particular, the case when $f = (-\Delta)^{\gamma /2} \phi$ with $\phi$ a smooth enough function in $L_1(\RR^2) \cap L_2(\RR^2)$ \cite[Proposition 5.1]{fageot2015wavelet}. Then, $\langle S, f\rangle$ is a well-defined Gaussian random variable with mean $0$ and variance $\sigma_0^2 \lVert (-\Delta)^{-\gamma/2} f \rVert_2^2 = \sigma_0^2 \lVert \phi \rVert_2^2$. The conditions on $f$ ensure that $(-\Delta)^{-\gamma/2} f$ is square-integrable. 
More generally, if $f$ and $g$ have enough vanishing moments, then the correlation form \eqref{eq:correlfg} is evaluated as
\begin{align}\label{eq:correlation} 
\mathcal{B}_S(f,g)
&=
\sigma_0^2 \langle (-\Delta)^{-\gamma / 2} f , \overline{(-\Delta)^{-\gamma/2} g} \rangle 
\\
&= \frac{\sigma_0^2}{(2\pi)^2} \int_0^{\infty}\int_0^{2\pi} \omega^{-2\gamma} \widehat{f} (\omega,\varphi) \overline{\widehat{g} ( \omega,\varphi)}  \mathrm{d}\varphi \omega \mathrm{d}\omega, \nonumber 
\end{align}
where we have used the Parseval relation and the polar Fourier coordinates for the last equality.
We interpret \eqref{eq:correlation} by saying that $S$ is an isotropic random field with generalized power spectrum $P_S(\boldsymbol{\omega}) = P_S(\omega) = \omega^{-2\gamma}$. 


\subsection{Complement on Steerability}
\label{sec:steerability}

We briefly recap some basic notions and facts on steerable functions.

\begin{definition}
  \label{def:steerable}
A function $\xi$ on the plane is steerable in the finite basis $\{\xi_\alpha\}$ if, for any rotation matrix $\M
R_{\theta_0}$, we can find coefficients $\{c_\alpha(\theta_0)\}$
such that
\begin{equation}
  \label{eq:steerabledef}
  \xi(\M R_{\theta_0}\V x) = \sum_{\alpha} c_{\alpha}(\theta_0) \xi_\alpha(\V x).
\end{equation}
A family of functions $\{\xi_\alpha\}$ is steerable if its linear span is invariant under arbitrary rotations.
\end{definition}

The last part of Definition \ref{def:steerable} is equivalent to saying that, for each $\theta_0$, a matrix $\M L(\theta_0)$ exists such that
\begin{equation}
  \label{eq:steerablematrix}
  \begin{pmatrix}
\xi_1(r,\theta+\theta_0) \\
\vdots\\
\xi_n(r,\theta+\theta_0) \\
  \end{pmatrix}
 = \M L(\theta_0)
   \begin{pmatrix}
\xi_1(r,\theta) \\
\vdots\\
\xi_n(r,\theta) \\
    \end{pmatrix}.
\end{equation}

An illustrative example of such a family is $\{\cos(\theta), \sin(\theta) \}$, whose rotations can be written as
\begin{equation}
\begin{pmatrix}
\cos(\theta+\theta_0) \\
\sin(\theta+\theta_0)
\end{pmatrix}
=
\begin{pmatrix}
\cos(\theta_0) &  -\sin(\theta_0)\\
\sin(\theta_0) &  \phantom{+}\cos(\theta_0)
\end{pmatrix}
\begin{pmatrix}
\cos(\theta)\\
\sin(\theta)
\end{pmatrix},
\end{equation}
which is a weighted sum of the unrotated functions.

If $\xi_\alpha$ is polar-separable as in \eqref{eq:polsep}, then its Fourier transform is also polar-separable as
\begin{align}
\hat{\xi}_\alpha(\omega,\varphi) = \hat{h}_\alpha(\omega) \ue^{\uj n_\alpha\varphi}
\label{eq:FourSepNbis}
\end{align}
for some $\hat{h}_\alpha$ related to $\eta_\alpha$ and $n_\alpha$. We note that this formulation provides a direct connection with the $n$th-order complex Riesz transform and leads to a comprehensive theory for the design and analysis of steerable filters and wavelets \cite{Felsberg,Koethe,NicolasSIAM,ZsSteerable}.


\subsection{Proof of Theorem \ref{prop:Statsq}}
\label{sec:proofGaussianvector}

The vector $\V q$ is Gaussian as a result of the ``integration" of the background noise $S$. As such, its law is characterized by its mean vector and covariance matrix. In the right part of \eqref{eq:tildetheta}, the first term is deterministic and the second is random with mean $0$. Hence, we have
\begin{align}
	\mathbb{E} [q_\alpha]  = \langle J ( \M R_{-\theta^*} \cdot ) , \xi_\alpha   \rangle  = \langle J  , \xi_\alpha ( \M R_{\theta^*}  \cdot ) \rangle  = \mathrm{e}^{\mathrm{j} n_\alpha \theta^*} \langle J, \xi_\alpha \rangle,
\end{align}
where we used \eqref{eq:rotatexialpha} for the last equality.

For the covariance matrix, we have that
\begin{align}
	\M C[\alpha, \beta] &=
	\mathbb{E} \left[ (q_\alpha - \mathbb{E} [q_\alpha])  \overline{(q_\beta - \mathbb{E} [q_\beta])} \right] \nonumber \\
	& = \mathbb{E} \left[ \langle S ,\xi_\alpha \rangle \overline{\langle S , \xi_\beta \rangle} \right].
\end{align}
Then, applying \eqref{eq:correlation} to $f = \xi_\alpha$ and $g =\xi_\beta$, we deduce that
\begin{align}
	\M C[\alpha, \beta] = \frac{\sigma^2_0}{(2\pi)^2} \left( \int_0^\infty w^{1 - 2\gamma} \widehat{h_\alpha}(\omega) \overline{\widehat{h}_\beta (\omega)} \mathrm{d} \omega \right) \delta[n_\alpha - n_\beta],
\end{align}
as expected, where we have exploited the Fourier domain expression \eqref{eq:FourSepNbis}. 

Finally, the Fisher information of a Gaussian model is provided by the Slepian-Bangs formula \cite[B.3.3]{StoicaSpectral}. In our case, however, we deal with \textit{complex} Gaussian vectors, because the measurement functions are typically complex (see \eqref{eq:polsep}). We therefore use a generalization of the Slepian-Bangs formula adapted to complex Gaussian random vectors \cite[B.3.25]{StoicaSpectral}. Moreover, this formula simplifies because the covariance matrix does not depend on the parameter $\theta^*$, giving \eqref{eq:FIq}. 


\subsection{Proof of Theorem \ref{Th:distinctH}}
\label{sec:distinctHproof}

As already noted, the CRLBs based on the sets $H$ and $H_+$ are the same.
We therefore restrict the decision to positive harmonics $n \in H_+$.

The functions $\xi_n$ are assumed to have distinct harmonics; therefore, the matrix \eqref{eq:covq} is diagonal with
\begin{equation} \label{eq:diago}
	\M C[n,n] = \frac{\sigma^2_0}{(2\pi)^2} \int_0^\infty \omega^{1-2\gamma} \lvert \hat{h}_n(\omega)\rvert^2 \mathrm{d}\omega. 
\end{equation}
Knowing that $\M C$ is diagonal, we easily deduce, starting from \eqref{eq:FIq}, that the Fisher information is in this case
\begin{align}
	\mathrm{FI}(\theta^*)  &=  \sum_{n\in H^+} (\mathrm{j} n ) \mathrm{e}^{\mathrm{j} n  \theta^*} \overline{u_n}  (\M C [n,n])^{-1} (- \mathrm{j} n ) \mathrm{e}^{- \mathrm{j} n \theta^*} u_n \nonumber \\
	& = \sum_{n \in H_+} n^2 \lvert u_n\rvert^2 ( \M C [n,n])^{-1}.
\end{align}
Knowing $C[n,n]$, we deduce the Fisher information, and hence the CRLB which is the inverse of the Fisher information. 
We finally remark that the CRLB does not depend on $\theta^*$ in that case.

\subsection{Construction of the Tridiagonal Covariance Matrix}
\label{sec:covmat}
The measurements are rearranged in the following way to create a (at most) tridiagonal covariance matrix. 
Our starting point is an arbitrary set of unique measurement indices of the form $(n,i)$, where $n$ is the harmonic and $i$ is the scale of the corresponding measurement. As before, since $(n,i)$ and $(-n,i)$ carry the same information for real-valued patterns, with $q_{-n,i} = \overline{q_{n,i}}$ and $u_{-n,i} = \overline{u_{n,i}}$, we can assume that all harmonics $n$ are positive without loss of generality.
We divide the set of measurement indices into as few disjoint sets $G_r$ as possible, subject to three conditions ($r=1,\ldots,g$ is an index for the sets).
\begin{itemize}
  \item Each set $G_r$ consists of indices with a single fixed harmonic $n_r$.
  \item Each set $G_r$ only contains indices with consecutive scales---it can also contain only a single element.
  \item If the two sets $G_r,G_s$ share the same harmonic $n_r=n_s$, then the scales in $G_r,G_s$ differ by a minimum of $2$.
\end{itemize}
The last condition is a consequence of the first two and requires one to have a minimal number of sets. Here, $g$ denotes the total number of such sets.

From the above conditions, it follows that each set $G_r$ has some $l_r$ elements of the form $(n_r,i_r),\ldots,(n_r,i_r+l_r-1)$. We use the notation \eqref{eq:rle} and re-index all measurements as $\tilde{q}_{r|e}$. The idea behind this re-indexing is that, following \eqref{eq:waveletcov}, the covariance matrix of the measurements $\{\tilde{q}_{r|e}\}$ for each fixed $r$ is Toepliz-tridiagonal of the form \eqref{eq:Tr}. Moreover, the overall covariance matrix is block-diagonal, with $\M T_r$, $r=1,\ldots,g$, as its diagonal blocks.


\subsection{Proof of Theorem \ref{Th:WaveletCRLB}.}
\label{proof:th2}
We first start with a lemma that shows that the lower bound in \eqref{CRLBwavelet} is strictly positive, and that will be used in the proof of Theorem \ref{Th:WaveletCRLB}.

\begin{lemma}
Assume that $\hat{h}$ is real-valued and that $\hat{h}(\V\omega) = 0$ for $\V\omega \notin (\pi/4, \pi ]$. Then, for every $z$, we have that
$ b_z > 2^{z/2+2}\abs{d_z}.$
Consequently, we have the relation
\begin{align}
B > 2 |D|.
\label{eq:B2D}
\end{align}
\label{lemma:ab}
\end{lemma}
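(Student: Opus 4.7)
My plan is to establish the inequality $b_z > 2^{z/2+2}|d_z|$ by a combination of Cauchy--Schwarz and AM--GM, exploiting the support assumption $\hat{h}(\omega)=0$ for $\omega\notin(\pi/4,\pi]$. The conclusion $B>2|D|$ is then immediate from the definitions in \eqref{eq:BD}: taking $z=-2\gamma$ and multiplying by $\sigma_0^2$, the claim $b_{-2\gamma}>2^{2-\gamma}|d_{-2\gamma}|$ is precisely $B>2|D|$ after absorbing the factor $2^{1-\gamma}$ into $D$.

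The first step is to restrict the domain of integration in $d_z$. Because $\hat{h}(\omega)$ vanishes outside $(\pi/4,\pi]$ and $\hat{h}(2\omega)$ outside $(\pi/8,\pi/2]$, their product is supported in $(\pi/4,\pi/2]$, so
\[
d_z = \frac{1}{2\pi}\int_{\pi/4}^{\pi/2}\omega^{z+1}\hat{h}(\omega)\hat{h}(2\omega)\,\ud\omega.
\]
Splitting the weight symmetrically as $\omega^{z+1}=\omega^{(z+1)/2}\cdot\omega^{(z+1)/2}$ and applying Cauchy--Schwarz yields $|d_z|\le \frac{1}{2\pi}\sqrt{A_1\cdot A_1'}$ where
\[
A_1 = \int_{\pi/4}^{\pi/2}\omega^{z+1}\hat{h}(\omega)^2\,\ud\omega, \qquad A_1' = \int_{\pi/4}^{\pi/2}\omega^{z+1}\hat{h}(2\omega)^2\,\ud\omega.
\]

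The second step is a change of variables $u=2\omega$ in $A_1'$, which gives $A_1' = 2^{-(z+2)}A_2$ with $A_2 = \int_{\pi/2}^{\pi}u^{z+1}\hat{h}(u)^2\,\ud u$. Crucially, $A_1$ and $A_2$ are exactly the two pieces of $2\pi b_z = A_1+A_2$ obtained by splitting the support $(\pi/4,\pi]$ of $\hat{h}$ at the midpoint $\pi/2$. Substituting back, $|d_z|\le \frac{2^{-(z+2)/2}}{2\pi}\sqrt{A_1 A_2}$. Applying AM--GM, $\sqrt{A_1A_2}\le (A_1+A_2)/2=\pi b_z$, which gives $|d_z|\le 2^{-z/2-2}b_z$, i.e.\ the claimed bound in non-strict form.

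Finally, the strict inequality will be the main subtle point. Equality in Cauchy--Schwarz would force $\hat{h}(\omega)$ and $\hat{h}(2\omega)$ to be proportional on $(\pi/4,\pi/2]$, while equality in AM--GM would require $A_1=A_2$; together these impose a very rigid scaling relation $\hat{h}(\omega)=2^{z/2+1}\hat{h}(2\omega)$ on the overlap interval, which is incompatible with any smooth compactly-supported band-pass profile used in practice (Meyer, Simoncelli, Papadakis, Held, LoG). I would dispatch this by noting that at least one of the two inequalities must be strict for such $\hat{h}$, hence $b_z > 2^{z/2+2}|d_z|$ and in particular $B>2|D|$. The tricky bookkeeping step to double-check will be the factor of $2^{-(z+2)/2}$ emerging from the Jacobian and the weight in the change of variables; everything else is routine.
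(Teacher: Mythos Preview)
Your proposal is correct and follows essentially the same route as the paper: restrict to the overlap interval $(\pi/4,\pi/2]$, apply Cauchy--Schwarz with weight $\omega^{z+1}$, change variables $u=2\omega$ to identify the second factor with the complementary piece of $b_z$, and finish with AM--GM. Your bookkeeping of the factor $2^{-(z+2)/2}$ is correct, and the specialization $z=-2\gamma$ gives $B>2|D|$ exactly as in the paper. The only minor difference is in the strictness discussion: the paper asserts that equality would force $\hat h(\omega)=\hat h(2\omega)$ a.e.\ (slightly imprecise, since Cauchy--Schwarz gives proportionality), whereas you correctly work out the combined equality condition $\hat h(\omega)=2^{z/2+1}\hat h(2\omega)$ on the overlap and then, like the paper, dispatch it by appealing to the specific band-pass profiles in use rather than to the bare hypotheses of the lemma.
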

%
\begin{proof}
We have that
\begin{align}
\abs{d_z} 
&= 
\frac{1}{2\pi} \left|\int_{\pi/4}^{\pi/2} \omega^z \hat{h}(\omega) \hat{h}(2\omega) \omega \ud\omega\right| \notag \\
&\leq 
\frac{1}{2\pi} \sqrt{\int_{\pi/4}^{\pi/2} \omega^z \hat{h}(\omega)^2  \omega \ud\omega \int_{\pi/4}^{\pi/2} \omega^z \hat{h}(2\omega)^2  \omega \ud\omega}  \notag \\
&= 
\frac{1}{2\pi} \frac{1}{2^{z/2+1}} \sqrt{ \int_{\pi/4}^{\pi/2} \omega^z \hat{h}(\omega)^2  \omega \ud\omega \int_{\pi/2}^{\pi} \omega^z \hat{h}(\omega)^2 \omega \ud\omega } \notag \\
&\leq 
\frac{1}{2\pi} \frac{1}{2^{z/2+2}} { \left( \int_{\pi/4}^{\pi/2} \omega^z \hat{h}(\omega)^2  \omega \ud\omega + \int_{\pi/2}^{\pi} \omega^z \hat{h}(\omega)^2 \omega \ud\omega \right) } \nonumber \\
&= 
\frac{1}{2^{z/2+2}} b_z,
\label{eq:dzbz}
\end{align}
where we have used the Cauchy-Schwarz inequality, and the inequality of arithmetic and geometric means.
By checking the equality conditions of the two inequalities, we see that they do not happen for the cases of interest.
 The equality happens only if $\hat{h}(\omega) = \hat{h}(2\omega)$ almost everywhere, which is not the case here. Thus, the  inequality in \eqref{eq:dzbz} is strict. 
Finally, by selecting $z = -2\gamma$, we deduce \eqref{eq:B2D}.
\end{proof}

Let us now prove Theorem \ref{Th:WaveletCRLB}.
The estimation of ${\theta^*}$ from $\{q_{n,i}\}$ is essentially the same as the estimation from $\{\tilde{q}_{r|e}\}$. In particular, the CRLB is the same.

The measurement vector $\tilde{\V q}$ is constructed by concatenating  the vectors $\tilde{\V q}_r = (\tilde q_{r|1},\ldots,\tilde q_{r|l_r})$, $r=1,\ldots,g$. This is a normal vector, with its mean $\tilde{\V\mu}$ given by concatenating the vectors $\tilde{\V\mu}_r = (\ue^{\uj n_r\theta^*}2^{-i_r\gamma}u_{r|1},\ldots,\ue^{\uj n_r\theta^*}2^{-(i_r+l_r-1)\gamma}u_{r|l_r})$, $r=1,\ldots,g$.

The covariance of $\tilde{\V q}$, as already noted, is block-diagonal of the form
\begin{align}
 \tilde{\V C} =
\begin{pmatrix}
    \M T_{1} & 0  & \dots  & 0 \\
    0 & \M T_{2}  & \dots  & 0 \\
    \vdots & \vdots  & \ddots & \vdots \\
    0 & 0 & \dots  & \M T_{g}
\end{pmatrix},
\end{align}
where the form of $\M T_r$ is given by \eqref{eq:Tr}. 
Each $\M T_r$ is a Toeplitz-tridiagonal matrix (or the scalar $B$ if $l_r=1$). Consequently, its eigenvalues are given by
\begin{align}
\lambda_r^{(t)} = B+2D\cos\left( \frac{t\pi}{l_r+1} \right),
\end{align}
where $1\leq t \leq l_r$. The corresponding eigenvectors are
\begin{align}
\V v_r^{(t)} = \sqrt{\frac{2}{k+1}}  \begin{pmatrix}
\sin\left(\frac{t\pi}{l_r+1}\right) \\
\sin\left(2\frac{t\pi}{l_r+1}\right) \\
\vdots \\
\sin\left(k\frac{t\pi}{l_r+1}\right)
\end{pmatrix}.
\end{align}
For positive $n_r$ (which we can assume without loss of generality), we have, according to \eqref{eq:FIq}, 
\begin{align}
\mathrm{FI}({\theta^*})
&= 2\Re\left( \left( \frac{\ud}{\ud \theta} \tilde{\V \mu} \right)^\HH {\tilde{\V C}}^{-1} \left( \frac{\ud}{\ud \theta} \tilde{\V\mu} \right)\right) \notag\\
&= 2\sum_{r=1}^g n_r^2 \tilde{\tilde{\V \mu}}_r^\HH \M T_{r}^{-1} \tilde{\tilde{\V\mu}}_r,
\end{align}
where
\begin{equation}
\tilde{\tilde{\V\mu}}_r = ( 2^{-i_r\gamma} u_{r | 1}, \ldots, 2^{-(i_r+l_r-1)\gamma} u_{r |  l_r} ).
\end{equation}
Thus, the explicit formula for the Fisher information is 
\begin{align}
& \mathrm{FI}({\theta^*}) ={}2\sum_{r=1}^g n_r^2 \sum_{t=1}^{l_r} \left( B +2D\cos\left(\frac{t\pi}{l_r+1}\right) \right)^{-1} \notag\\
& {}\times{} \abs{ \sum_{e=1}^{l_r} 2^{-(i_r+e-1)\gamma} u_{r |  e} \sin \left(e\frac{t\pi}{l_r+1} \right) }^2.
\end{align}
Also, 
\begin{align}
\frac{2}{B+2\abs{D}} \sum_{r=1}^g n_r^2 \norm{ \tilde{\tilde{\mu}}_r }_2^2 \leq \mathrm{FI}({\theta^*}) \leq \frac{2}{B-2\abs{D}} \sum_{r=1}^g n_r^2  \norm{ \tilde{\tilde{\mu}}_r }_2^2,
\end{align}
where we have used the positivity of $B-2|D|$ demonstrated in  Lemma \ref{lemma:ab}.
This shows that $\mathrm{FI}({\theta^*})$ is finite if and only if $\sum_{r=1}^g n_r^2 \norm{ \tilde{\tilde{\mu}}_r }_2^2 $ is finite.
Going back to the original indices $(n,i)$, the CRLB therefore satisfies \eqref{CRLBwavelet}.

\subsection{Proof of Proposition \ref{prop:asymptoticwavelet}.}
\label{pr:gen}
The first equivalence directly follows from Theorem \ref{Th:WaveletCRLB} and the fact that $B-2|D|>0$ (Lemma \ref{lemma:ab}).
We now show that the series $\sum_{n,i} n^2 4^{-i\gamma} \abs{u_{n,i}}^2 $ converges if and only if $(-\Delta)^{\gamma/2} \frac{\partial}{\partial \theta} J(r,\theta) \in L_2(\mathbb{R}^2)$, implying the second equivalence.
By assumption on $J$, we write
\begin{align}
J(r,\theta) = \sum_{(n,i)\in\ZZ^2} v_{n,i} \xi_{n,i}(r,\theta).
\end{align}
We recall that, according to (53), $ \langle \partial_\theta J,\xi_n\rangle = -\langle J,\partial_\theta\xi_n\rangle$.
Therefore, 
\begin{align}
&\norm{ (-\Delta)^{\gamma/2} \frac{\partial}{\partial \theta} J(r,\theta) }_2^2 \notag \\
&= \norm{ \sum_{(n,i)\in\ZZ^2} v_{n,i} (-\Delta)^{\gamma/2} \frac{\partial}{\partial \theta} \xi_{n,i}(r,\theta) }_2^2 \notag \\
&= \norm{ \sum_{(n,i)\in\ZZ^2} \uj n v_{n,i} (-\Delta)^{\gamma/2} \xi_{n,i}(r,\theta) }_2^2 \notag \\
&= \!\!\!\!\!\!\sum_{n_1,n_2, i_1, i_2}\!\!\!\!\!\! n_1 n_2 v_{n_1,i_1} v^*_{n_2,i_2} \bigl\langle  (-\Delta)^{\gamma/2} \xi_{n_1,i_1},  (-\Delta)^{\gamma/2} \xi_{n_2,i_2} \bigr\rangle .
\end{align}
With the same polar coordinate computation as in \eqref{eq:correlation}, we obtain that $\bigl\langle  (-\Delta)^{\gamma/2} \xi_{n_1,i_1},  (-\Delta)^{\gamma/2} \xi_{n_2,i_2} \bigr\rangle$ is given by
\begin{align}
\frac{2^{\abs{i_1 -i_2}-2\min(i_1,i_2)\gamma}}{2\pi} \delta[n_1-n_2] 
\int_0^{\infty} \omega^{2\gamma} \hat{h}(\omega) \hat{h}(2^{\abs{i_1 -i_2}}\omega) \omega\ud\omega.
\end{align}
This can be simplified as
\begin{align}
&\bigl\langle  (-\Delta)^{\gamma/2} \xi_{n_1,i_1},  (-\Delta)^{\gamma/2} \xi_{n_2,i_2} \bigr\rangle \notag \\
&= \begin{cases}
 4^{-i\gamma} b_{-2\gamma}, & \text{if } n_1=n_2, i_1=i_2 \\
 2\cdot4^{-\min(i_1,i_2)\gamma} d_{-2\gamma} , & \text{if } n_1=n_2, \abs{i_1-i_2} =1 \\
0, & \text{otherwise}.
\end{cases}
\end{align}
Therefore,
\begin{align}
&\norm{ (-\Delta)^{\gamma/2}\frac{\partial}{\partial \theta} J(r,\theta) }_2^2 = \sum_{n\in\ZZ} n^2  \Bigl( \sum_{i\in \ZZ}  \bigl( 4^{-i\gamma} \abs{v_{n,i}}^2 b_{-2\gamma}  \notag\\
& {} + 2\cdot4^{-i\gamma}v_{n,i}^{*} v_{n,i+1}d_{-2\gamma} + 2\cdot4^{-(i-1)\gamma} v_{n,i}^{*} v_{n,i-1}d_{-2\gamma}  \bigr) \Bigr).
\end{align}
By employing the Cauchy-Schwarz inequality,
\begin{align}
&\abs{\sum_i 4^{-i\gamma} v^*_{n,i} v_{n,i+1}} \notag\\
&\leq \sqrt{\left( \sum_i 4^{-i\gamma} \abs{v_{n,i}}^2 \right)\left( \sum_i 4^{-i\gamma} \abs{v_{n,i+1}}^2 \right)} \notag\\
&= 2^{\gamma} \sum_i 4^{-i\gamma} \abs{v_{n,i}}^2.
\end{align}
Also, we deduce
\begin{align}
&\abs{\sum_i 4^{-(i-1)\gamma} v^*_{n,i} v_{n,i-1}} \notag\\
&\leq \sqrt{\left( \sum_i 4^{-(i-1)\gamma} \abs{v_{n,i}}^2 \right)\left( \sum_i 4^{-(i-1)\gamma} \abs{v_{n,i-1}}^2 \right)} \notag\\
&= 2^{\gamma} \sum_i 4^{-i\gamma} \abs{v_{n,i}}^2.
\end{align}
We arrive at the inequalities
\begin{align} \label{eq:controlJnormtofinish}
&\left( b_{-2\gamma} -2^{- \gamma+2}|d_{-2\gamma}| \right)\left( \sum_{(n,i)\in \ZZ^2} n^2 4^{-i\gamma} \abs{v_{n,i}}^2 \right)\notag\\
&\quad{} \leq \norm{ (-\Delta)^{\gamma/2}\frac{\partial}{\partial \theta} J(r,\theta) }_2^2 \notag \\
&\quad{}\leq \left( b_{-2\gamma} + 2^{- \gamma+2}|d_{-2\gamma}| \right)\left( \sum_{(n,i)\in \ZZ^2} n^2 4^{-i\gamma} \abs{v_{n,i}}^2 \right).
\end{align}
Lemma \ref{lemma:ab} implies that $b_{-2\gamma} > 2^{- \gamma+2}\abs{d_{-2\gamma}}$. Hence, \eqref{eq:controlJnormtofinish} means that $(-\Delta)^{\gamma/2}\frac{\partial}{\partial \theta} J(r,\theta) \in L_2(\mathbb{R}^2)$ if and only if $\sum_{(n,i) \in \ZZ^2} n^2 4^{-i\gamma} \abs{v_{n,i}}^2 < \infty$.

The remaining part is to show that $\sum_{(n,i) \in \ZZ^2} n^2 4^{-i\gamma} \abs{v_{n,i}}^2 < \infty$ if and only if $\sum_{(n,i) \in \ZZ^2} n^2 4^{-i\gamma} \abs{u_{n,i}}^2 < \infty$, from which we deduce the second part of Proposition \ref{prop:asymptoticwavelet}. 
We therefore focus on the relationship between $u_{n,i}$ and $v_{n,i}$ and remark that
\begin{align}
u_{n,i} = \left\langle J(r,\theta), \xi_{n,i} \right\rangle = \sum_{{n'},{i'}} v_{{n'},{i'}} \left\langle \xi_{{n'},{i'}}, \xi_{n,i} \right\rangle.
\end{align}
Again, it is straightforward to verify that
\begin{align}
\left\langle \xi_{{n'},{i'}}, \xi_{i,n} \right\rangle = \begin{cases}
b_0, & \text{if } n={n'}, i={i'} \\
2 d_0, & \text{if } n={n'}, \left\vert i-{i'}\right\vert  = 1 \\
0, & \text{otherwise}.
 \end{cases}
\end{align}
Hence,
\begin{align}
u_{n,i} = 2d_0 v_{n,i-1} + b_0v_{n,i} + 2d_0 v_{n,i+1}
\end{align}
and
\begin{align}
&2^{-i\gamma} u_{n,i} = \left(2^{1-\gamma} d_0 \right)\left(  2^{-(i-1)\gamma} v_{n,i-1} \right) + b_0 \left( 2^{-i\gamma} v_{n,i} \right) \notag\\
&+ \left( 2^{1+\gamma} d_0 \right) \left( 2^{-(i+1)\gamma} v_{n,i+1} \right).
\end{align}
This indicates that, for any given $n$, the sequence $\left\{ 2^{-i\gamma} u_{n,i} \right\}_{i \in \ZZ}$ is the result of applying to the sequence $\left\{ 2^{-i\gamma} v_{n,i} \right\}_i$ the FIR filter with $z$ transform
\begin{align}
F(z) = 2^{1-\gamma} d_0 z^{-1} + b_0 + 2^{1+\gamma} d_0 z.
\end{align}
This filter has no poles on the unit circle. Moreover, if
\begin{align} \label{eq:conditiongamma}
\frac{b_0}{\abs{d_0}} \neq 2^{1+\gamma} + 2^{1-\gamma},
\end{align}
then none of its zeros will lie on the unit circle. Thus,
\begin{align}
0 < \min_{\omega \in [0,2\pi) } \abs{F\left(\ue^{\uj \omega}\right) }^2 \leq \max_{\omega \in [0,2\pi) }  \abs{F\left(\ue^{\uj \omega}\right) }^2 < \infty.
\end{align}
The filtering relationship between the two sequences implies the following inequalities on their energies:
\begin{align}
&\left(  \min_{\omega \in [0,2\pi) } \abs{F\left(\ue^{\uj \omega}\right) }^2  \right) \sum_{i \in \ZZ} \abs{2^{-i\gamma} v_{n,i}}^2 \notag\\
&\leq \sum_{i \in \ZZ}  \abs{2^{-i\gamma} u_{n,i}}^2 \notag\\
&\leq \left(  \max_{\omega \in [0,2\pi) }  \abs{F\left(\ue^{\uj \omega}\right) }^2 \right) \sum_{i \in \ZZ} \abs{2^{-i\gamma} v_{n,i}}^2.
\end{align}
Thus,
\begin{align}
&\left(  \min_{\omega \in [0,2\pi) } \abs{F\left(\ue^{\uj \omega}\right) }^2  \right) \sum_{(i,n)\in \ZZ^2} n^2 4^{-i\gamma} \abs{ v_{n,i} }^2 \notag \\
&\leq \sum_{(i,n)\in \ZZ^2} n^2 4^{-i\gamma} \abs{ u_{n,i} }^2 \notag\\
& \leq \left(  \max_{\omega \in [0,2\pi) } \abs{F\left(\ue^{\uj \omega}\right) }^2  \right) \sum_{(i,n)\in \ZZ^2} n^2 4^{-i\gamma} \abs{ v_{n,i} }^2.
\end{align}
Consequently,
\begin{align}
\sum_{n,i} n^2 4^{-i\gamma} \left\vert v_{n,i}\right\vert ^2 < \infty  \Leftrightarrow \sum_{n,i} n^2 4^{-i\gamma} \left\vert u_{n,i}\right\vert ^2 < \infty,
\end{align}
as expected, and the proof is finished.

We finally remark that value of $\gamma$ that is excluded in the proof satisfies \eqref{eq:conditiongamma}. We provide the corresponding values for different radial profiles in Table \ref{tab:excludedGamma}. 
\begin{table}[!t]
\centering
\caption{Excluded $\gamma$ values}
\vspace{0.05 in}
\begin{tabular}{c | l}
\hline
\hline
Wavelet type & Excluded $\gamma$\\
\hline
Shannon  \cite{DNU} & $
\infty
$\\
Simoncelli \cite{PortillaParametric} & $
0.736
$\\
Meyer \cite{daubechies1992ten} & $
1.978 $\\
Papadakis \cite{Papadakis} & $
1.476 $\\
Held  \cite{held2010steerable}  & $
1.443  $\\
\hline
\hline
\end{tabular}
\label{tab:excludedGamma}
\end{table}


\bibliographystyle{ieeetr}
\bibliography{refs_cramer}

\end{document}